\newcommand{\be}{\begin{equation}}
\newcommand{\bea}{\begin{eqnarray}}
\newcommand{\ee}{\end{equation}}
\newcommand{\eea}{\end{eqnarray}}
\newcommand{\nn}{\nonumber}
\newcommand{\qa}{\alpha}
\newcommand{\qb}{\beta}
\newcommand{\qy}{\theta}
\newcommand{\ql}{\lambda}
\newcommand{\qr}{\rho}
\newcommand{\qs}{\sigma}
\newcommand{\qf}{\varphi}
\newcommand{\qF}{\Phi}
\newcommand{\dagg}{^{\dag}}
\renewcommand{\Re}{{\rm Re}\,}
\renewcommand{\Im}{{\rm Im}\,}
\newcommand{\rd}{{\rm d}}
\newcommand{\fr}[2]{{\textstyle \frac{#1}{#2}}}
\newcommand{\CC}{{\mathbb C}}
\newcommand{\bits}{ \{0,1\} }
\newcommand{\cF}{{\mathcal F}}
\newcommand{\vecx}{{\boldsymbol{x}}}
\newcommand{\isdef}{\stackrel{\rm def}{=}}
\newtheorem{theorem}{Theorem}[section]
\newtheorem{lemma}[theorem]{Lemma}
\begin{document}

\setlength{\parindent}{0mm}
\setlength{\hoffset}{0mm}
\setlength{\oddsidemargin}{-11mm}
\setlength{\evensidemargin}{-11mm}
\setlength{\textwidth}{190mm}
\setlength{\columnwidth}{95mm}
\setlength{\columnsep}{5mm}
\setlength{\columnseprule}{0mm}
\setlength{\voffset}{-10mm}
\setlength{\topmargin}{0mm}
\setlength{\headheight}{0mm}
\setlength{\headsep}{0mm}
\setlength{\textheight}{240mm}

\title{Minutia-pair spectral representations for fingerprint template protection}

\author{Taras Stanko and Boris \v{S}kori\'{c}
\\
{\small Eindhoven University of Technology, The Netherlands}}

\date{ }



\maketitle

\begin{abstract}
\noindent
We introduce a new fixed-length representation of fingerprint minutiae,
for use in template protection.
It is similar to the `spectral minutiae' representation of Xu et al.
but is based on coordinate differences between pairs of minutiae.
Our technique has the advantage that it does not discard the phase information
of the spectral functions.
We show that the fingerprint matching performance (Equal Error Rate) is comparable
to that of the original spectral minutiae representation, while the speed is improved.
\end{abstract}



\section{Introduction}
\label{sec:intro}

\subsection{Privacy-preserving storage of biometric data}

Biometrics-based authentication has become popular because 
of its great convenience. Biometrics cannot be forgotten or accidentally left at home.
While biometric data is not strictly speaking secret
(we are after all leaving a trail of fingerprints, DNA etc. behind us),
it is important to protect biometric data for various reasons,
the most important of which is privacy.
Unprotected storage of biometric data would reveal medical conditions and would allow for
cross-matching entries in different databases.
Furthermore, large-scale availability of biometric data would make it easier for 
malevolent parties to 
leave misleading traces at at crime scene.
(E.g. artificial fingerprints \cite{matsumoto2002}, synthesized DNA \cite{FWDG2010}.)

One of the easiest ways to properly protect a biometric database against 
breaches and insider attacks is to store biometrics in {\em hashed} form,
just like passwords, but with the addition of an error-correction step
to get rid of the measurement noise.
To prevent critical leakage from the error correction redundancy data, one 
uses a {\em Helper Data System} (HDS) \cite{LT2003,dGSdVL,SAS2016},
for instance a {\em Fuzzy Extractor} or a {\em Secure Sketch} \cite{JW99,DORS2008,CFPRS2016}.

A HDS typically makes use of an error-correcting code and hence needs
a {\em fixed-length representation} of the biometric. 
Such a representation is not straightforward when 
the measurement noise can cause features of the biometric to appear or disappear,
due to e.g. occlusion of iris areas or fuzziness of fingerprint minutiae.
A very useful fixed-length representation called {\em spectral minutiae}
was introduced by Xu et al. \cite{XVBKAG2009,XV2009,XV2009CISP,XuVeldhuis2010}.
A Fourier-like spectral function is built up on a fixed discrete grid, in such a way that each detected 
fingerprint minutia adds a contribution to the function. 
Comparison of spectral functions is robust against changes in the number of available
biometric features.

\subsection{Contributions and outline}
\label{sec:contrib}

We have the following results regarding spectral representations of fingerprint minutiae.
\begin{itemize}
\item
We introduce spectral functions based on pairs of minutiae.
By working with coordinate {\em differences} we immediately
obtain a translation-invariant representation.
Whereas Xu et al.'s spectral functions have to discard phase information
in order to achieve translation invariance, our method retains phase information.
\item
We test our pair-based spectral minutiae matching technique
on two fingerprint databases.
The achieved Equal Error Rate is comparable to Xu et al.
\item
Our fingerprint matching is faster even though we have to sum over minutia pairs instead of
individual minutiae.
The speedup is due to the fact that we need fewer grid points on which to compute the spectral function.
\item
A further speedup can be obtained by skipping one laborious step in the verification procedure:
rotating the fingerprint so as to obtain optimal alignment with the enrolled fingerprint. 
Skipping this step leads only to a minimal penalty in terms of False Acceptance Rate and False Rejection Rate.
\end{itemize}

In Section~\ref{sec:prelim} we briefly review Helper Data Systems and spectral minutiae 
functions.
In Section~\ref{sec:motivation} we discuss the drawbacks of Xu et al.'s 
spectral minutiae technique.
We introduce our minutia pair approach in Section~\ref{sec:pair},
and we study its fingerprint matching performance in Section~\ref{sec:results}.
Section~\ref{sec:efficiency} discusses the computational efficiency of the
verification procedure.

\section{Preliminaries} 
\label{sec:prelim}

\subsection{Notation and terminology}
\label{sec:notation}

We denote the number of minutiae found in a fingerprint by~$Z$.
The coordinates of the $j$'th minutia are $x_j,y_j$ and its orientation
is $\qy_j$. Let $f$ be a function of two real-valued arguments.
The two-dimensional Fourier transform $\tilde f=\cF f$ is defined as
$\tilde f(k_x,k_y)=\int_{-\infty}^\infty\! f(x,y)e^{-ik_x x-ik_y y}\rd x \rd y$.
The inverse relation $f=\cF^{-1}\tilde f$ is given by
$f(x,y)=(\fr1{2\pi})^2\int_{-\infty}^\infty\! \tilde f(k_x,k_y)e^{ik_x x+ik_y y}\rd k_x \rd k_y$.

The complex conjugate of $z\in\CC$ is written as $z^*$.
The hermitean conjugate $M\dagg$ of a matrix $M$ is given by
$(M\dagg)_{ij}=M_{ji}^*$.
The inner product of two complex vectors $u,v$ is $\langle u,v\rangle=u\dagg v$.
The Pearson correlation coefficient of two length-$n$ vectors
is defined as
$\qr(u,v)=\fr1{n}\langle \frac{u-u_{\rm av}}{\qs_u},\frac{v-v_{\rm av}}{\qs_v}\rangle$,
where $u_{\rm av}=\fr1n\sum_i u_i$ and $\qs^2_u=\fr1n\sum_i |u_i-u_{\rm av}|^2$.

We will use the abbreviations FR = False Reject, FRR = False Reject Rate, FA = False Accept, 
FAR = False Accept Rate, EER = Equal Error Rate, ROC = Receiver Operating Characteristic.

\subsection{Helper Data Systems}
\label{sec:HDS}

A Helper Data System (HDS) for a (possibly non-discrete) source 
consists of two functions, {\tt Gen} and {\tt Rec}.
Given an enrollment measurement $X$ of the source, {\tt Gen} produces 
redundancy data $W\in\bits^*$ called {\em helper data} and a secret string~$S$.
The helper data is stored. 
The storage is considered insecure, i.e. attackers learn~$W$.
At some later time, a verification measurement is performed, yielding outcome $X'\approx X$
which is a noisy version of~$X$.
The {\tt Rec} function takes as input $X'$ and $W$. 
It outputs an estimator $\hat S$ which should equal $S$ if the noise was not excessive.
In a general HDS, there is no constraint on the distribution of~$S$.
A desirable property is that $S$ has high entropy given~$W$.

A HDS is the perfect primitive for privacy protection of biometric databases
against inside attackers and intruders, 
who typically obtain access not only to stored data but also to decryption keys.
The HDS creates a noiseless secret and thus makes it possible to protect
biometric secrets in the same way as passwords: by hashing.
For every enrolled user, the database contains $W$ and a hash $\chi(S)$. 
In the verification phase, the hash of the reconstructed $\hat S$ is compared against the stored $\chi(S)$.
Ideally, $W$ contains just enough information to allow for the error correction, and does 
not leak any privacy-sensitive information about the raw biometric~$X$.
Furthermore, if $\chi$ is a properly chosen one-way function and $S$ has enough entropy given $W$,
the hash value $\chi(S)$ does not reveal~$S$.

HDSs for discrete sources \cite{BBCS1991,JW99,Boy04,DORS2008,CFPRS2016}
and continuum sources \cite{LT2003,VTOSS10,dGSdVL,SAS2016}
are a well studied topic. 
Typically a HDS uses an error correcting code, which requires that the biometric measurement 
is turned into a discrete fixed-length representation.

\subsection{Spectral representation of minutiae}
\label{sec:prelimspectral}

Subsequent measurements of the same finger may not always result in
the same set of observed minutiae. 
This is problematic if one needs a fixed-length representation of a fingerprint,
e.g.\;when a HDS is used.
The technique of {\em spectral minutiae} was introduced by Xu et al.
\cite{XVBKAG2009,XV2009,XV2009CISP}
as a way to obtain a fixed-length representation.
The set of enrolled minutiae is turned into a function $f_\qs(x,y)$ on the $xy$-plane
by summing narrow Gaussian peaks (with width $\qs$) centered on the minutia locations;
then a translation-invariant expression $g_\qs$ is obtained by taking the absolute value of
the Fourier transform,
\be
	g_\qs(k_x,k_y)=|\tilde f_\qs(k_x,k_y)|=e^{-\fr{\qs^2}2(k_x^2+k_y^2)}
	\left|\sum_{j=1}^Z e^{-ik_x x_j -i k_y y_j}\right|.
\label{spectralXu}
\ee
In order to get an expression with simple behaviour under rotation and scaling,
they sampled $g_\qs$ on a log-polar grid.
Let $k_x(\qa,\qb)=e^\qa\cos\qb$ and $k_y(\qa,\qb)=e^\qa\sin\qb$
where $\qa,\qb$ are sampled with equal spacing.
A matrix $G^\qs$ is constructed as $G^\qs_{\qa\qb}=g_\qs(k_x(\qa,\qb),k_y(\qa,\qb))$.
Under the combination of scaling and rotation, 
${x_j \choose y_j}\mapsto
\left(\begin{matrix} \cos\qf & \sin\qf \cr -\sin\qf&\cos\qf \end{matrix}\right)
{\ql x_j\choose\ql y_j}$ 
for all~$j$,
the $G^\qs$ transforms as $G^\qs_{\qa\qb}\mapsto G^{\qs/\ql}_{\qa+\ln\ql,\qb+\qf}$.
For small $\qs$ it holds that $\qs/\ql\approx\qs$ and hence the transform is almost equal
to a shift on the $\qa\qb$-grid.\footnote{
The effect on $\qs$ was not explicitly mentioned in the work of Xu et al.
}
Xu et al. investigated fingerprint matching in the spectral minutiae domain
by looking at the Pearson correlation between a freshly obtained $G^\qs$ and the enrolled~$G^\qs$.
Their procedure included a search to find values $\ql,\qf$ that maximise the correlation.
It turned out that in practice one can fix $\ql=1$ and that the $\qf$-search can be restricted to 
the interval from $-10^\circ$ to $+10^\circ$, in steps of~$2^\circ$. 

In order to extract more information from a fingerprint Xu et al introduced a variant
of the $g_\qs$ function which contains information about the minutia orientations~$\qy_j$.
They inserted a factor $(k_x\cos\qy_j+k_y\sin\qy_j)$ or $e^{i\qy_j}$ into the summation 
in $g_\qs$ (\ref{spectralXu}). 
Unsurprisingly,
using information from both the ordinary $G^\qs$ representation and the orientation-containing variant
yielded better results (in terms of e.g. ROC curves and EER) than using only a single 
representation.

Xu et al also investigated a minutiae representation that is fully invariant under
translation, rotation and scaling. 
Let $H^\qs=\cF G^\qs$ be the discrete Fourier transform of $G^\qs_{\qa\qb}$
with respect to $\qa$ and $\qb$; then scalings and rotations have the effect of
merely producing a phase factor multiplying $H^\qs$; 
the absolute value $|H^\qs|$ is fully invariant.
However, it turned out that fingerprint matching in the $|H^\qs|$-domain does not perform well.

\section{Motivation}
\label{sec:motivation}

The spectral minutiae technique as developed by Xu et al
\cite{XVBKAG2009,XV2009,XV2009CISP} has a number of unsatisfactory aspects.
\begin{enumerate}
\item
Translation invariance is obtained by taking the absolute value of a Fourier transform.
This step discards a lot of information.
\item
Xu et al conclude that the scaling factor $\ql$ does not have to be taken into account,
since it is always close to~1.
But in their best fingerprint matching implementation
they still apply logarithmic sampling in the radial $k$-direction,
$\sqrt{k_x^2+k_y^2}=e^\qa$.
Such sampling does not match the radial information density in the fingerprint and
hence makes it necessary to take many many samples than in the case of linear sampling.
\item
In combination with a HDS, the $\qf$-search is time consuming. 
This is caused not by the repeated re-computation of the score, but
by the fact that in a full HDS every $\qf$-attempt needs
an evaluation of the {\tt Rec} function and the computation of a hash.
\end{enumerate}

We address the first issue by introducing a spectral representation 
that is based on coordinate differences $\vecx_a-\vecx_b$ only.
The advantage is immediate translation invariance without information loss,
enabling us to work with fewer samples.
The drawback is that each summation over $Z$ minutiae is replaced by 
a summation over ${Z\choose 2}$ pairs.
The overall effect on the computation time during reconstruction is a 
tradeoff between these two.  
In Section~\ref{sec:efficiency} we show that
the tradeoff works in our advantage.

We address the second issue by performing a Fourier transform 
{\em only in the angular direction}.
In the radial direction our sampling occurs in the spatial domain and is linear. 

The third issue could be addressed by developing a method to quickly 
determine the global orientation of a captured fingerprint image.
(Knowledge of the global orientation, even if inaccurate, reduces the search space.
Furthermore,
storing the global orientation during enrolment as helper data
does not leak sensitive information.)
However, with our pair-based spectral representation it turns out that
executing the $\qf$-search yields only a very modest performance improvement;
the search may as well be omitted.
In Section~\ref{sec:imagerotation} we show the difference in performance.

\section{The minutia-pair approach}
\label{sec:pair}

\subsection{Definitions and properties}
\label{sec:defsprops}

\begin{figure}[t]
\vskip-5mm
\centering
\includegraphics[width=60mm,height=45mm]{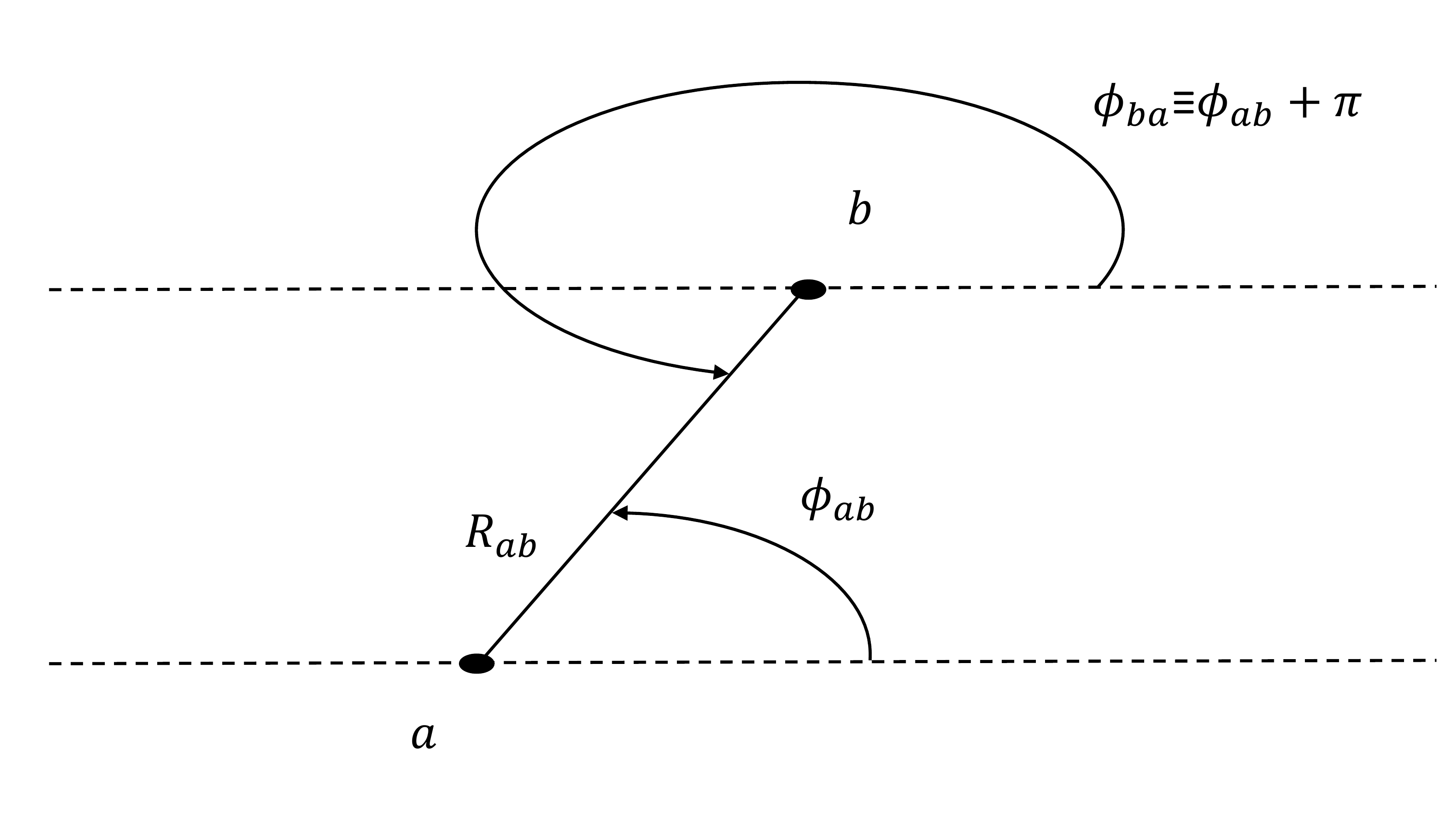}	
\caption{\it Distance $R_{ab}$ and angle $\qf_{ab}$ for a minutia pair. } 
\label{fig:AngleDistancePair}
\end{figure}

Let $R_{ab}=|\vecx_a-\vecx_b|$
and let $\tan\qf_{ab}=(y_a-y_b)/(x_a-x_b)$ for minutiae $a,b\in\{1,\ldots,Z\}$.
See Fig.\,\ref{fig:AngleDistancePair}.
We define two translation-invariant spectral functions as follows
\bea
	L_\vecx(q,w) &\isdef & \sum_{ {a,b\in\{1,\ldots,Z\}} \atop{a\neq b}} e^{iq\qf_{ab}}e^{iw\ln R_{ab}}
\label{defL}
	\\
	L_{\vecx\qy}(q,w) &\isdef &
	\sum_{ {a,b\in\{1,\ldots,Z\}} \atop{a\neq b}} e^{iq\qf_{ab}} e^{iw\ln R_{ab}} e^{i(\qy_a-\qy_b)}.
\label{defLtheta}
\eea
Here the subscript $\vecx$ denotes the set of minutia locations, and likewise $\qy$ stands for the set of
minutia orientations.
We call the functions $L_\vecx,L_{\vecx\qy}$ `spectral' because (\ref{defL}) is the Fourier transform 
(with respect to the radial coordinate $\ln R$ and the angle $\qf$)
of a sum of delta functions centered on the values $\vecx_a-\vecx_b$ in the plane.

Let $\qF=\left(\begin{matrix} \cos\qf & -\sin\qf \cr \sin\qf&\cos\qf \end{matrix}\right)$ be a rotation matrix.
Our spectral functions (\ref{defL}),(\ref{defLtheta}) have simple behaviour under the
combined scaling and rotation $\vecx_j\mapsto \ql\qF\vecx_j$, $\qy_j\mapsto\qy_j+\qf$,
\bea
	L_{\ql\qF\vecx}(q,w) &=& e^{iq\qf}e^{iw\ln\ql}L_\vecx(q,w)
\label{phaseLx}
	\\
	L_{\ql\qF\vecx,\qy+\qf}(q,w) &=& e^{iq\qf}e^{iw\ln\ql}L_{\vecx\qy}(q,w).
\label{phaseLxtheta}
\eea
Note that the absolute values $|L_\vecx(q,w)|$, $|L_{\vecx\qy}(q,w)|$ are invariant under translation,
scaling and rotation.
Without giving details we mention that, unfortunately, fingerprint matching based on 
$|L_\vecx|$, $|L_{\vecx\qy}|$ without the phase information performs badly.

Similar to Xu et al we need to sample $w$ at equally spaced steps in order to exploit
the phase behaviour (\ref{phaseLx}),(\ref{phaseLxtheta}) under scaling.
However,
if we choose to ignore scaling entirely (see point 2 in Section~\ref{sec:motivation}), then
there is no reason to Fourier transform the radial direction, and we introduce an
alternative spectral function,
\bea
	M_\vecx(q,R) &\isdef& \hskip-7mm
	\sum_{a,b\in\{1,\ldots,Z\}\atop a\neq b} \!\!\!\!\!\!\!\!
	e^{iq\qf_{ab}}  \exp\left[-\frac{(R-R_{ab})^2}{2\qs^2}\right]
\label{defM}
	\\
	M_{\vecx\qy}(q,R) 
	\hskip-1mm &\isdef& \hskip-7mm
	\sum_{a,b\in\{1,\ldots,Z\}\atop a\neq b} \!\!\!\!\!\!\!\!
	e^{iq\qf_{ab}}  \exp\left[-\frac{(R-R_{ab})^2}{2\qs^2}\right]
	e^{i(\qy_a-\qy_b)}.
	\;\;\;\;\;\;
\label{defMtheta}
\eea
In the radial direction, the functions $M_\vecx$ and $M_{\vecx\qy}$
consist of a sum of ${Z\choose 2}$ Gaussian peaks centered on the values $R_{ab}$.
The width $\qs>0$ reduces the scheme's sensitivity to small perturbations in the
minutia properties.

Under a rotation ($\vecx_j\mapsto \qF\vecx_j$, $\qy_j\mapsto\qy_j+\qf$) we have
$M_\vecx(q,R)\mapsto e^{iq\qf}M_\vecx(q,R)$ and
$M_{\vecx\qy}(q,R)\mapsto e^{iq\qf}M_{\vecx\qy}(q,R)$.
We want all our spectral functions to be single-valued.\footnote{
Invariant under rotations $\qf$ that are an integer multiple of $2\pi$.
}
Hence $q$ always has to be integer.

\begin{lemma}
\label{lemma:oddvanishes}
For odd $q$ it holds that $L_\vecx(q,w)=0$ for all $w$, and
$M_\vecx(q,R)=0$ for all $R$.
\end{lemma}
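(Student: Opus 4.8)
The plan is to exploit the symmetry of the double sums in (\ref{defL}) and (\ref{defM}) under interchanging the two minutiae of a pair. In both definitions the summation runs over all \emph{ordered} pairs $(a,b)$ with $a\neq b$, so $(a,b)\mapsto(b,a)$ is an involution on the index set, and I would pair up each term with its image under this map.

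First I would record how the summands transform under the swap. The pair distance is manifestly symmetric, $R_{ba}=|\vecx_b-\vecx_a|=R_{ab}$, so the radial factors $e^{iw\ln R_{ab}}$ in (\ref{defL}) and $\exp[-(R-R_{ab})^2/(2\qs^2)]$ in (\ref{defM}) are left unchanged. The angular factor is the essential ingredient: the vector $\vecx_b-\vecx_a=-(\vecx_a-\vecx_b)$ points in exactly the opposite direction, so $\qf_{ba}=\qf_{ab}+\pi$, whence $e^{iq\qf_{ba}}=e^{iq\pi}e^{iq\qf_{ab}}=(-1)^q\,e^{iq\qf_{ab}}$; here the integrality of $q$, already imposed for single-valuedness, is precisely what makes $(-1)^q$ meaningful. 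Grouping the sums over unordered pairs $\{a,b\}$ then produces a common factor $1+(-1)^q$ multiplying each surviving term, i.e.\ $L_\vecx(q,w)=\bigl(1+(-1)^q\bigr)\sum_{\{a,b\}}e^{iq\qf_{ab}}e^{iw\ln R_{ab}}$ and likewise for $M_\vecx$. For odd $q$ the prefactor vanishes, giving $L_\vecx(q,w)=0$ for all $w$ and $M_\vecx(q,R)=0$ for all $R$ at once.

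I do not expect a genuine obstacle, only one bookkeeping point: the convention for $\qf_{ab}$. The relation $\tan\qf_{ab}=(y_a-y_b)/(x_a-x_b)$ pins down $\qf_{ab}$ only modulo $\pi$, so I would first state explicitly (in accordance with Fig.\,\ref{fig:AngleDistancePair}) that $\qf_{ab}$ is the direction angle of the displacement vector $\vecx_a-\vecx_b$ taken modulo $2\pi$; with that reading the shift $\qf_{ba}=\qf_{ab}+\pi$ under $a\leftrightarrow b$ is unambiguous and the cancellation goes through verbatim. This also clarifies why, effectively, only even $q$ carries information: odd $q$ is exactly the parity that is killed by the $a\leftrightarrow b$ antisymmetry of the purely positional functions $L_\vecx$ and $M_\vecx$ (the argument does \emph{not} apply to $L_{\vecx\qy}$, $M_{\vecx\qy}$, whose extra factor $e^{i(\qy_a-\qy_b)}$ is itself antisymmetric and so compensates the sign).
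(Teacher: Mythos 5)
Your proposal is correct and follows essentially the same route as the paper's proof: pairing each ordered term $(a,b)$ with $(b,a)$, using $R_{ba}=R_{ab}$ and $\qf_{ba}=\qf_{ab}+\pi$ to extract the common factor $1+(-1)^q$, which vanishes for odd $q$, with the $M_\vecx$ case handled analogously. Your extra remark fixing the convention for $\qf_{ab}$ (direction angle of $\vecx_a-\vecx_b$ modulo $2\pi$, as in the figure) is a sensible clarification but does not change the argument.
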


\begin{proof}
In (\ref{defL}) every pair of indices $a,b$ gives two terms in the summation.
Using $R_{ba}=R_{ab}$ and $\qf_{ba}\equiv \qf_{ab}+\pi\mod 2\pi$ (see Fig.\,\ref{fig:AngleDistancePair}),
we write $e^{iq\qf_{ab}}e^{iw\ln R_{ab}}$ $+e^{iq\qf_{ba}}e^{iw\ln R_{ba}}$
$=e^{iq\qf_{ab}}e^{iw\ln R_{ab}}[1+e^{iq\pi}]$
$=e^{iq\qf_{ab}}e^{iw\ln R_{ab}}[1+(-1)^q]$. This vanishes when $q$ is odd.
The proof for $M_\vecx$ is analogous.
\end{proof}

\subsection{Choosing the grid points}
\label{sec:grid}

We have to choose a discrete $(q,w)$-grid of points on which to evaluate
$L_\vecx$ and $L_{\vecx\qy}$.
On the one hand, the grid should contain many points so that the 
spectral functions contain sufficient information about the fingerprint.
On the other hand, having too many grid points results in an inefficient scheme.
Lemma~\ref{lemma:oddvanishes} tells us that we do not have to compute $L_\vecx$
for odd~$q$.
Furthermore, we know that, at a given $q$, the spectral functions detect angular 
periodic features of size $\approx 2\pi/q$ radians.
This leads to a natural cutoff at large $q$ where the length scale becomes smaller
than the feature size in a typical fingerprint, and noise starts to dominate.
Similarly, there is a natural maximum for $w$, namely where $2\pi/w$ matches
$\min_{ab: a\neq b}\ln R_{ab}$.
Finally we note that $L_\vecx(-q,-w)=L_\vecx^*(q,w)$
and $L_{\vecx\qy}(-q,-w)=(-1)^q L_{\vecx\qy}^*(q,w)$.
This means that the grid point $(-q,-w)$ contains exactly the same information as $(q,w)$
and hence can be omitted.
The considerations listed above are the only theoretical guidelines for choosing the grid;
the best choice must be found by trial and error.

The considerations for $M_\vecx, M_{\vecx\qy}$ are similar. 
The grid is a $(q,R)$-grid.
The maximum $q$ should be roughly the same as for the $L$-functions.
The natural cutoffs for $R$ are given by $\min_{ab: a\neq b}R_{ab}$ and $\max_{ab} R_{ab}$.
It holds that $M_\vecx(-q,R)=M_\vecx^*(q,R)$ and 
$M_{\vecx\qy}(-q,R)=(-1)^q M_{\vecx\qy}^*(q,R)$.
Hence it suffices to look at positive $q$ only.

\subsection{Introducing weights}
\label{sec:weights}

In the computation of a spectral function at enrollment,
it is possible to introduce a weight factor for each of the $(a,b)$-pairs 
in the summation.
It is advantageous to set a low weight for minutia pairs which are unlikely to be recovered later.
A low recovery likelihood may occur e.g.\;when a minutia has low quality.
Another reason can be a very large value of $R_{ab}$, in which case the recovery is sensitive 
to noise at the edge of the image, or a very small $R_{ab}$ which may cause later minutia misidentification
in case of noise.
In our experiments we have not used weights other than $0$ or~$1$.

\subsection{Choosing the score function}
\label{sec:score}

Let $F$ denote one of the four spectral functions $L_\vecx, L_{\vecx\qy}$, $M_\vecx$, $M_{\vecx\qy}$
obtained at enrollment,
and $F'$ the noisy version of $F$ obtained later, in the verification phase.
We need a metric or `score' function which quantifies how close $F'$ is to~$F$.
As $F$ and $F'$ are complex-valued, there are quite some options.
We have experimented with correlation functions for the radial and phase part of the complex numbers,
as well as the real and imaginary part. Furthermore we have tried distance in the complex plane,
with and without normalisation of the function $F$ as a whole.
In our experiments
it turns out that a complex correlation-like quantity
is best able to discriminate between genuine fingerprint
matches and impostors. We define our score $S$ as
\be
	S(F,F')=|\qr(F,F')|
\label{score}
\ee
where $\qr$ stands for the correlation as defined in Section~\ref{sec:notation},
and the matrices $F,F'$ are treated as vectors.

\subsection{Fusion of scores}
\label{sec:fusion}

The spectral functions $L_\vecx$ and $L_{\vecx\qy}$ together contain more information
about the fingerprint than each one separately.
The information is partially overlapping.
We construct a `fused' score by adding the two scores (\ref{score}) in the same way as
\cite{XVBKAG2009}:
$S(L_\vecx,L_\vecx')+S(L_{\vecx\qy},L_{\vecx\qy}')$. 
Analogously, for the $M$-functions we work with the fused score
$S(M_\vecx,M_\vecx')+S(M_{\vecx\qy},M_{\vecx\qy}')$. 

\section{Experimental results}
\label{sec:results}

We have applied our minutia-pair approach to 
the Verifinger database 
and the MCYT database \cite{MCYT}.
The Verifinger database contains fingerprints from 
six individual persons, ten fingers per individual, eight images per finger.
The size of each image is $326\times 357$ pixels.
The MCYT database
contains fingerprints from 100 individuals, 10 fingers per individual, 12 images per finger ($256\times 400$ pixels).
The fingerprints are generally of higher quality than in the Verifinger database.

We extracted 
minutia coordinates and orientations from the images by using the VeriFinger software \cite{VeriFinger}.
\subsection{Optimal parameter choices}
\label{sec:optparam}

Good results were obtained with the following parameter settings.
For the $L$-functions, $|q|\in\{1,\ldots,24\}$ and
$w\in[0.2,37.7]$ with 32 equally spaced values.
For the $M$-functions, $q\in\{1,\ldots,16\}$;
$R\in[16,130]$ with 20 equally spaced points (MCYT database);  
$R\in[16,160]$ with 25 equally spaced points (Verifinger database).
For the $L_\vecx$ and $M_\vecx$ functions we take only even $q$, as explained in Lemma~\ref{lemma:oddvanishes}.
We set $\qs=2.3$ pixels.

A minutia extracted by VeriFinger is labeled with a quality $Q \in [0, 100]$. 
We took only minutiae with $Q\geq 45$.
Furthermore we used an additional selection rule that turns out to improve overall results a bit:
a minutia pair is discarded from the $\sum_{ab}$ summation in (\ref{defL},\ref{defLtheta},\ref{defM},\ref{defMtheta}) if $2R_{ab}$ exceeds
the horizontal size of the image.

In Fig.\,\ref{fig:visualM} we show an example of the $M_\vecx$ and $M_{\vecx\qy}$ spectral function.
Entirely different fingers obviously produce very different results.
The two leftmost columns correspond to the same finger.
Noisy images of the same finger do not produce results that, 
to the human eye,
are clearly correlated.
However, it turns out (Section~\ref{sec:ROC}) that the similarities are enough to
distinguish between the enrolled user from an impostor. 

\begin{figure}[h]
\begin{center}
\setlength{\unitlength}{1mm}
\begin{picture}(80,90)(0,0)

\put(6,65){\includegraphics[width=15mm]{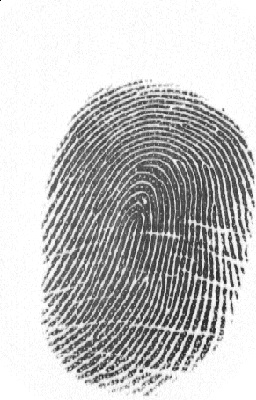}}
\put(32,65){\includegraphics[width=15mm]{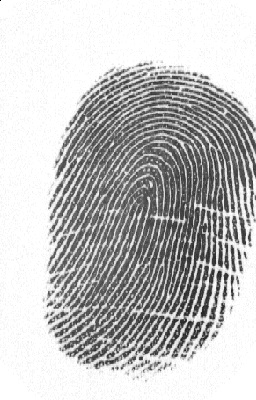}}
\put(58,65){\includegraphics[width=15mm]{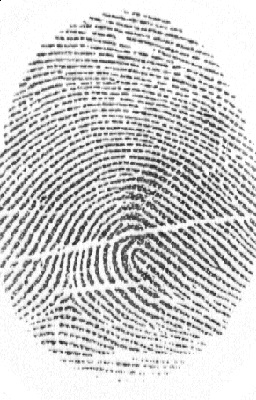}}
\put(6,90){\tiny person 1, finger 6}
\put(6,88){\tiny image 5}
\put(32,90){\tiny person 1, finger 6}
\put(32,88){\tiny image 8}
\put(58,90){\tiny person 3, finger 4}
\put(58,88){\tiny image 2}

\put(0,60){\small $\Re$}
\put(0,56){\small $M_\vecx$}
\put(6,53){\includegraphics[width=25mm]{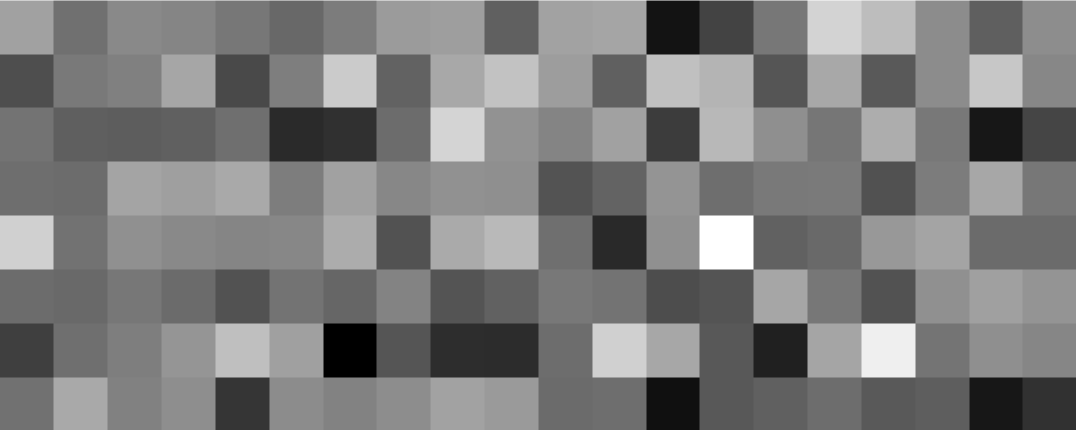}}
\put(32,53){\includegraphics[width=25mm]{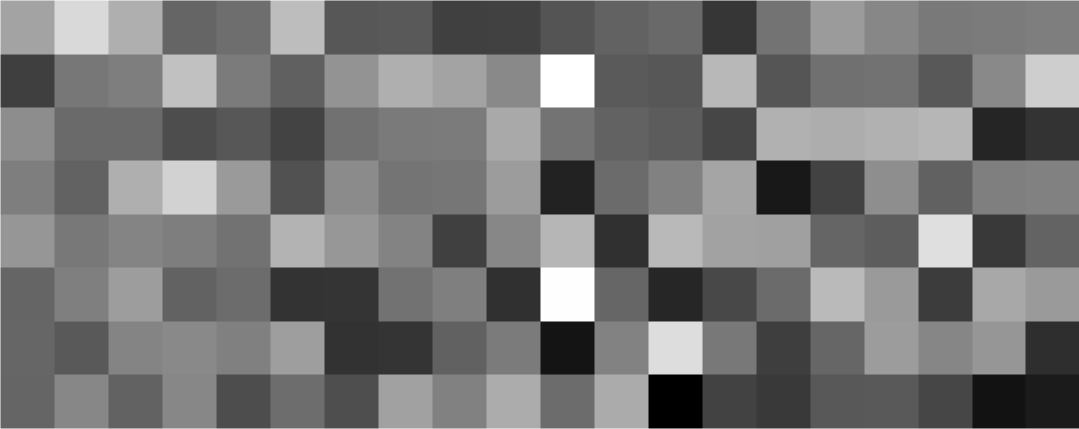}}
\put(58,53){\includegraphics[width=25mm]{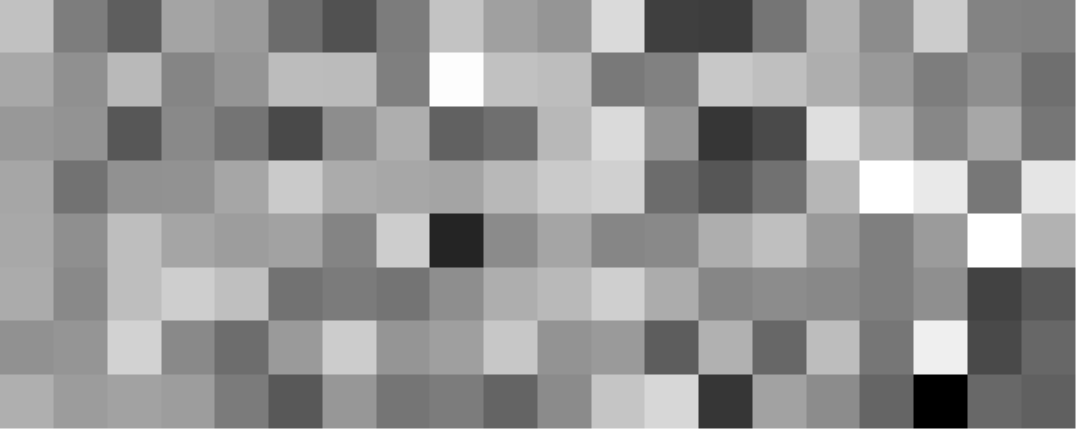}}

\put(0,49){\small $\Im$}
\put(0,45){\small $M_\vecx$}
\put(6,42){\includegraphics[width=25mm]{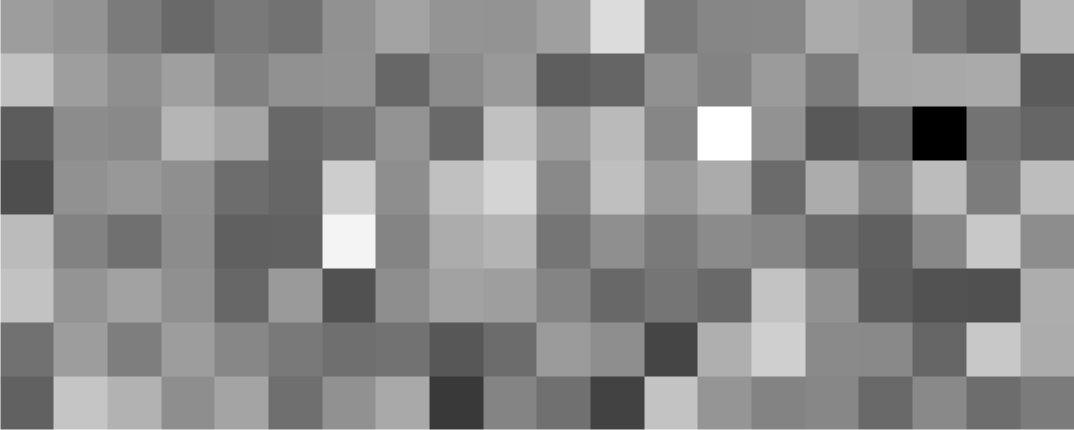}}
\put(32,42){\includegraphics[width=25mm]{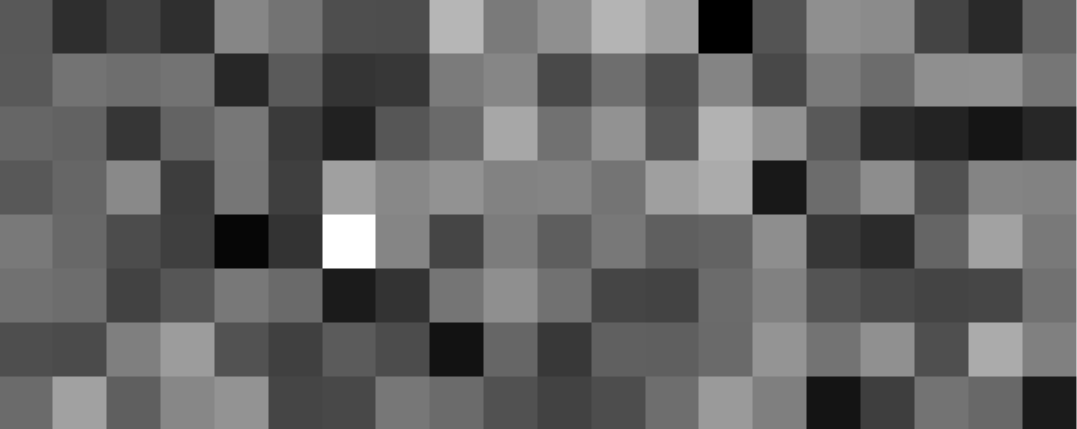}}
\put(58,42){\includegraphics[width=25mm]{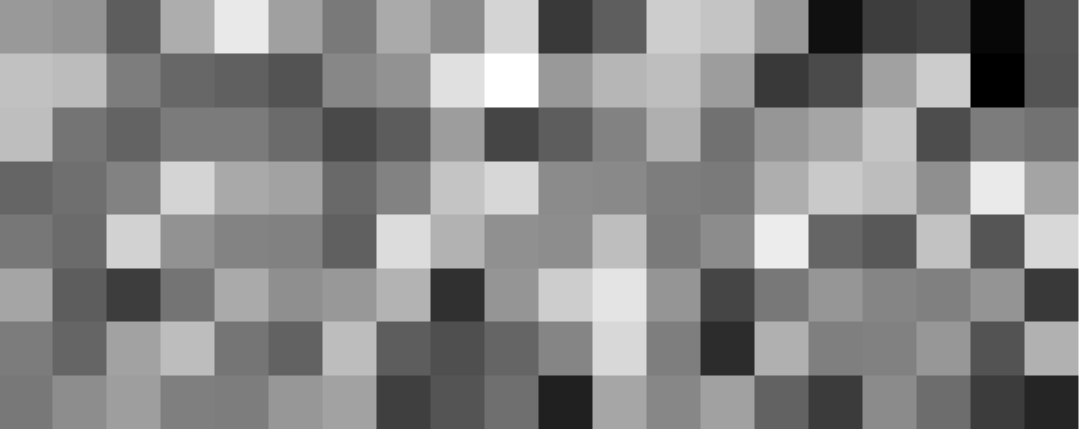}}

\put(0,33){\small $\Re$}
\put(-1,29){\small $M_{\vecx\qy}$}
\put(6,21){\includegraphics[width=25mm]{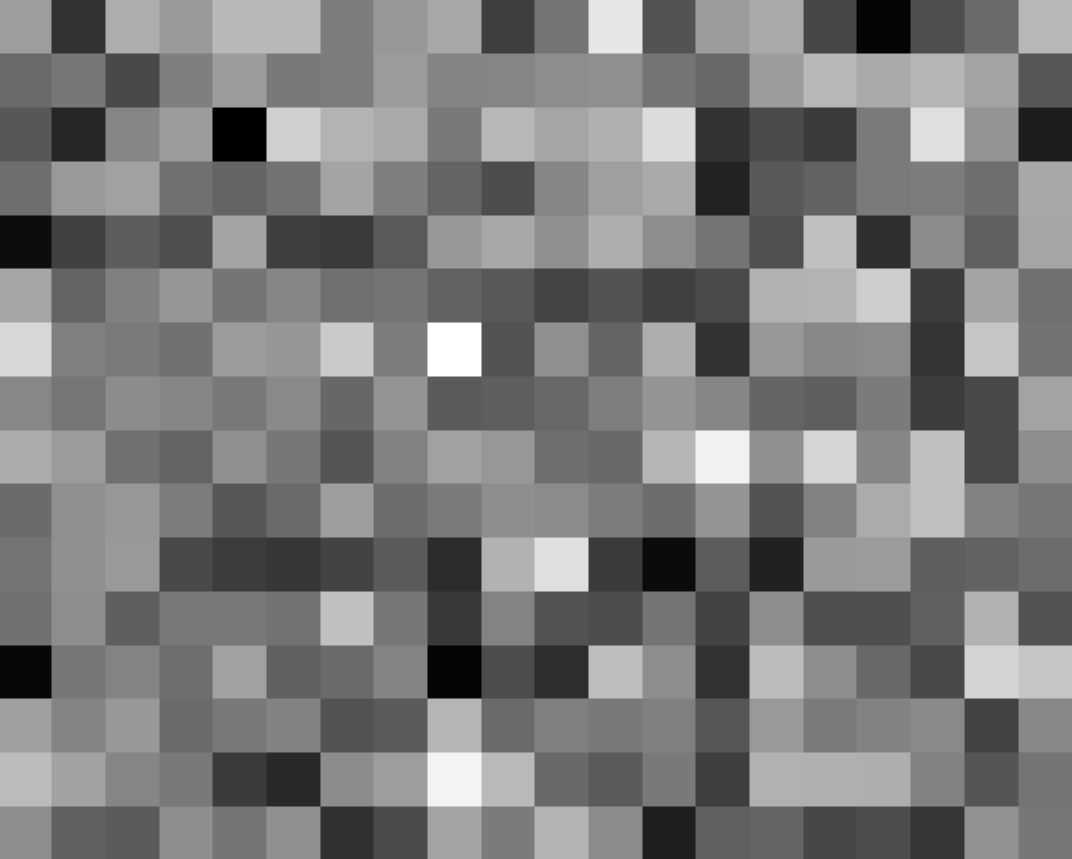}}
\put(32,21){\includegraphics[width=25mm]{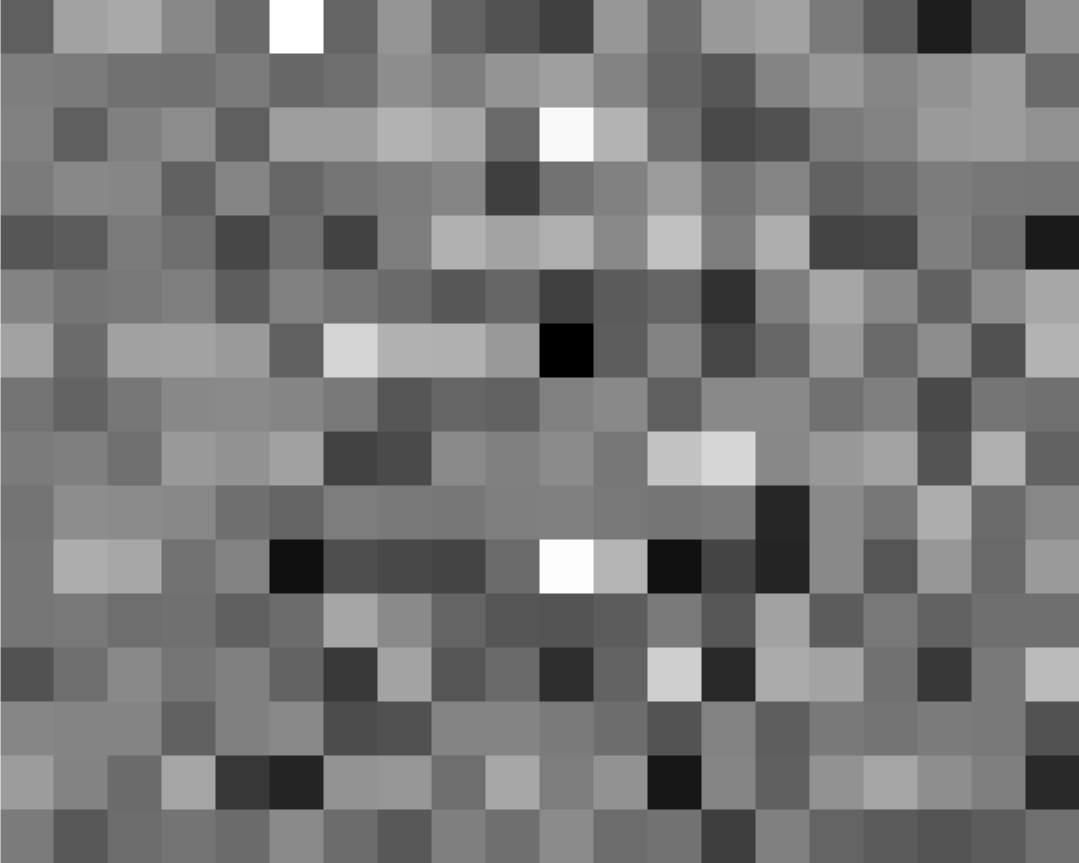}}
\put(58,21){\includegraphics[width=25mm]{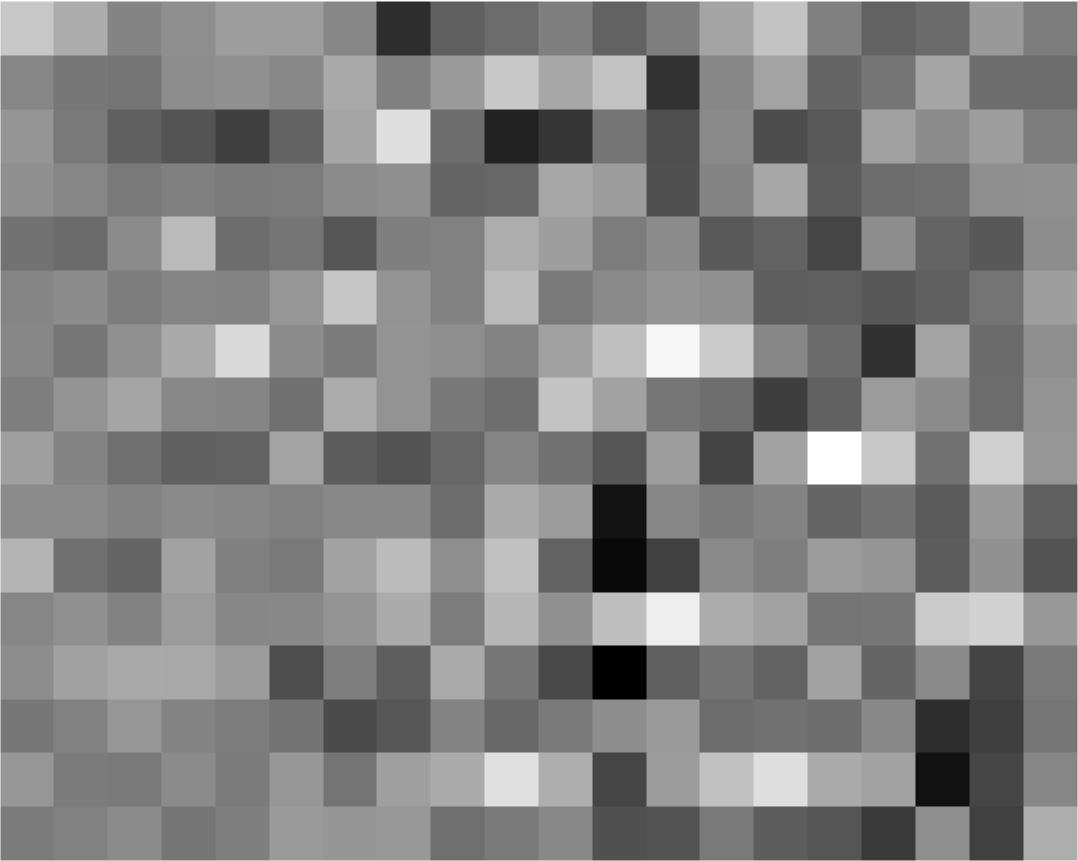}}

\put(0,12){\small $\Im$}
\put(-1,8){\small $M_{\vecx\qy}$}
\put(6,0){\includegraphics[width=25mm]{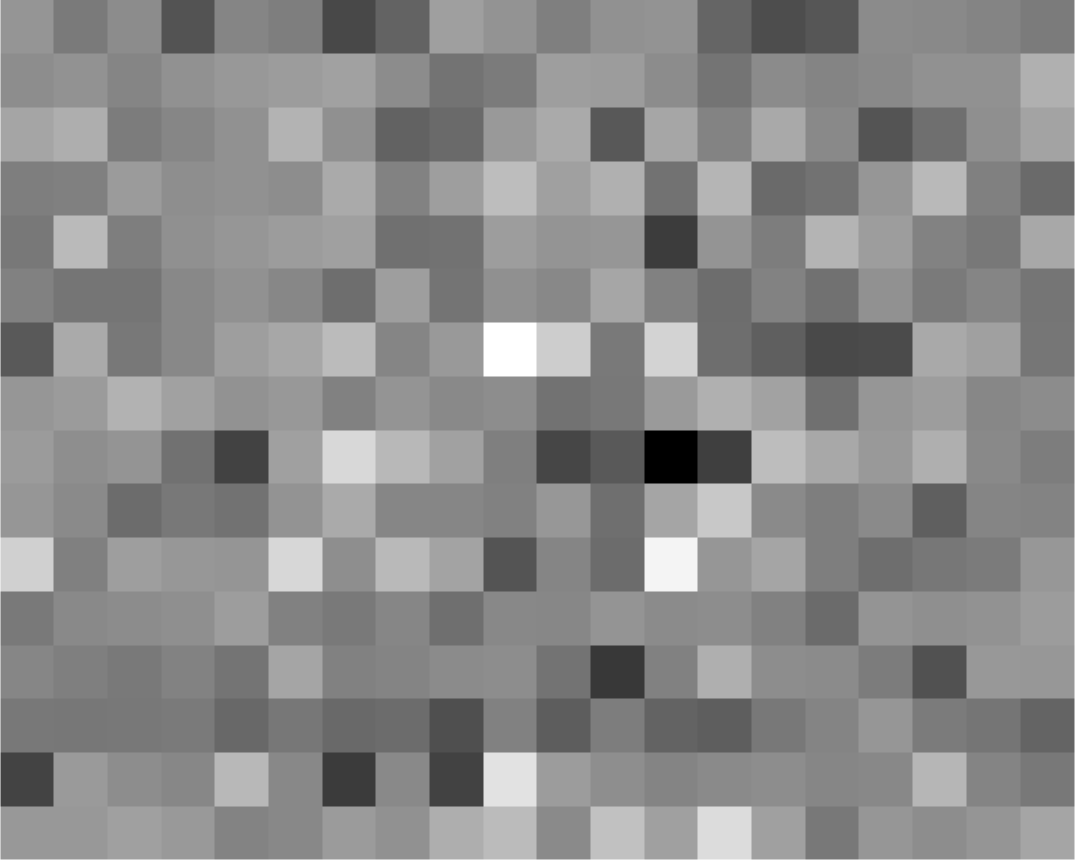}}
\put(32,0){\includegraphics[width=25mm]{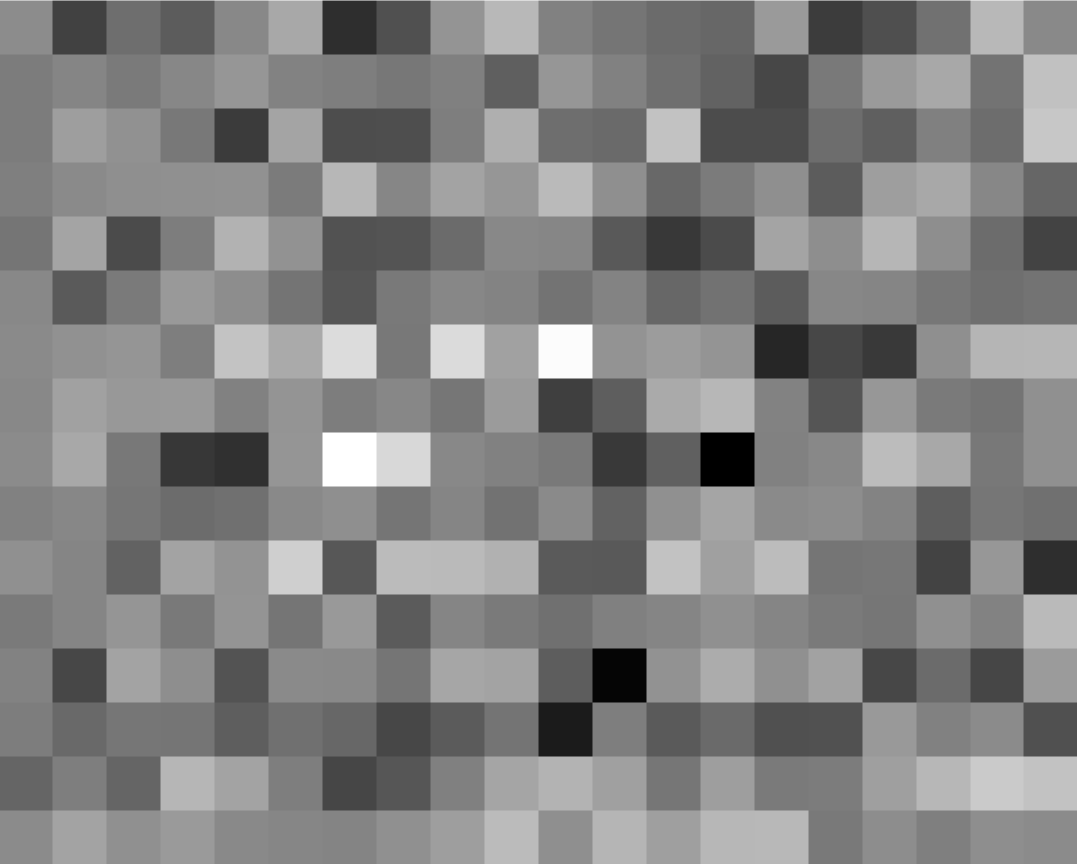}}
\put(58,0){\includegraphics[width=25mm]{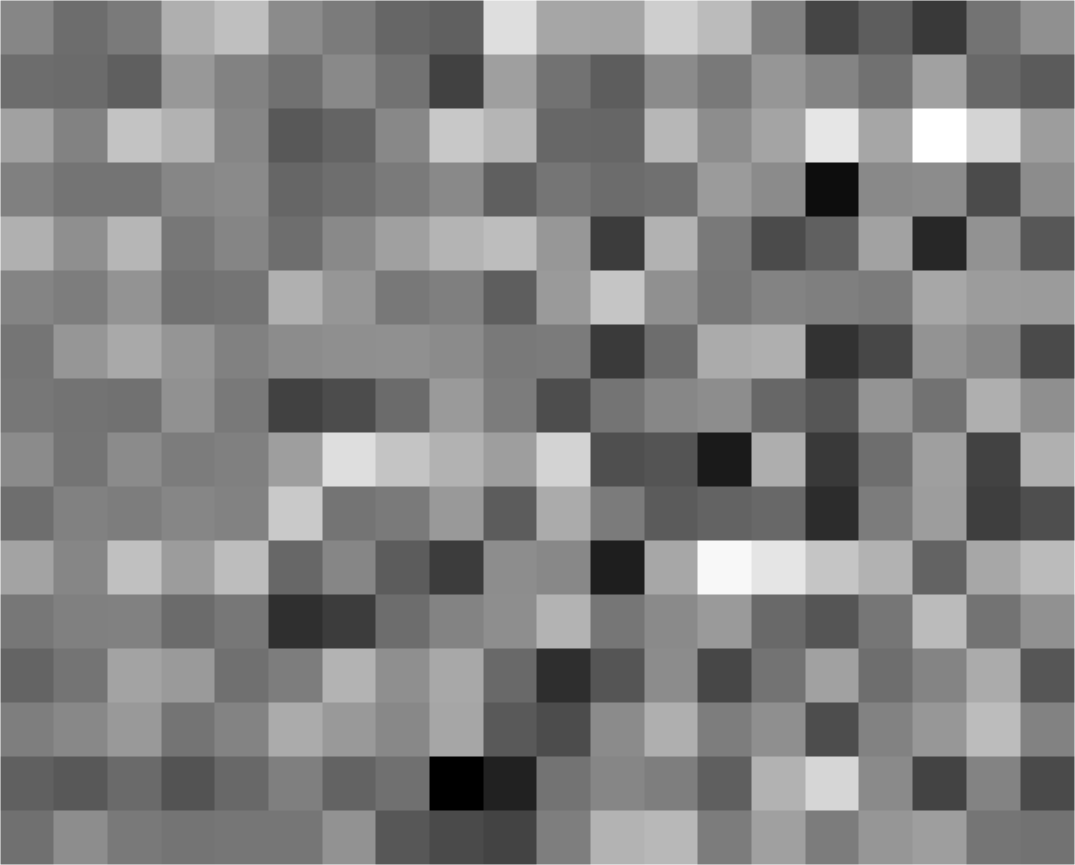}}

\end{picture}
\caption{\it Example of the spectral functions $M_\vecx$ and $M_{\vecx\qy}$.
MCYT database. 
The vertical axis is the $q$-axis, with $q$ increasing upward. 
In each image, black represents the most negative value on the grid,
and white the most positive.}
\label{fig:visualM}
\end{center}
\end{figure}

\subsection{ROC curves and Equal Error Rates}
\label{sec:ROC}

We work in a {\em verification} setting, i.e. a stated identity has to be verified.
We determine the False Rejection Rate (FRR) by comparing, for each finger in the database,
all the pairs of images.
We determine the False Acceptance Rate (FAR) by looking at each pair of different fingers,
where one image is drawn at random for each finger (independently per pair).\footnote{
This includes pairs of unlike fingers, e.g.\;thumb vs index finger.
The statistics do not change much when only pairs of like fingers are compared.
}
We draw Receiver Operating Characteristic (ROC) curves as FAR plotted against FRR. 
Each point in the ROC curve corresponds to one threshold setting.
The Equal Error Rate (EER) is the error rate in the point where FRR equals FAR. 

Table~\ref{t:EER} lists the EER values that we obtained.
The ROC curves are shown in Fig.\,\ref{fig:ROC}.
We see that the $M$-functions consistently outperform the $L$-functions,
and that the $L_{\vecx\qy}, M_{\vecx\qy}$ spectral functions outperform the
location based functions.
Furthermore we see that fusion of $M_\vecx$ and $M_{\vecx\qy}$
yields only a modest improvement over~$M_{\vecx\qy}$.
We conclude that, in our pair-based approach, the best option is to work either with
$M_{\vecx\qy}$ or the fusion of $M_\vecx$ and $M_{\vecx\qy}$.

We benchmark our system against results reported by Xu et al. \cite{XVBKAG2009},
which are based on ten individuals in the MCYT database who have high-quality fingerprint images.
The ROC curves are shown in Fig.\;\ref{fig:ROC10}, and
Table~\ref{t:benchmark} contains the EER comparison.\footnote{
Unfortunately, \cite{XVBKAG2009} does not mention which ten individuals were selected.
}
We conclude that our pair-based spectral function $M_{\vecx\qy}$ has a discrimination performance comparable to
Xu et al.'s spectral function.

\begin{table}[h]
\begin{center}
\caption{\it Equal Error Rates obtained with the parameter settings given in Section~\ref{sec:optparam}.
The notation `$F$' stands for either $L$ or $M$.
No rotation of the verification image.}
\label{t:EER}
\begin{tabular}{|c|c|c|c|c|}
\hline
Database & Function $F$ & $F_\vecx$ & $F_{\vecx\qy}$ & Fusion
\\ \hline\hline
MCYT & $L$ & 5.3\% & 3.5\% & 3.0\%
\\ \hline
& $M$ &4.0\% & 2.5\% & 2.2\%
\\ \hline\hline
Verifinger & $L$ & 11\% & 4.9\% & 5.7\%
\\ \hline
& $M$ & 8.0\% & 3.3\% & 3.2\%
\\ \hline
\end{tabular}
\end{center}
\end{table}

\begin{figure}[t]
\begin{center}
\setlength{\unitlength}{1mm}
\begin{picture}(90,200)(0,0)

\put(0,152){\includegraphics[width=88mm]{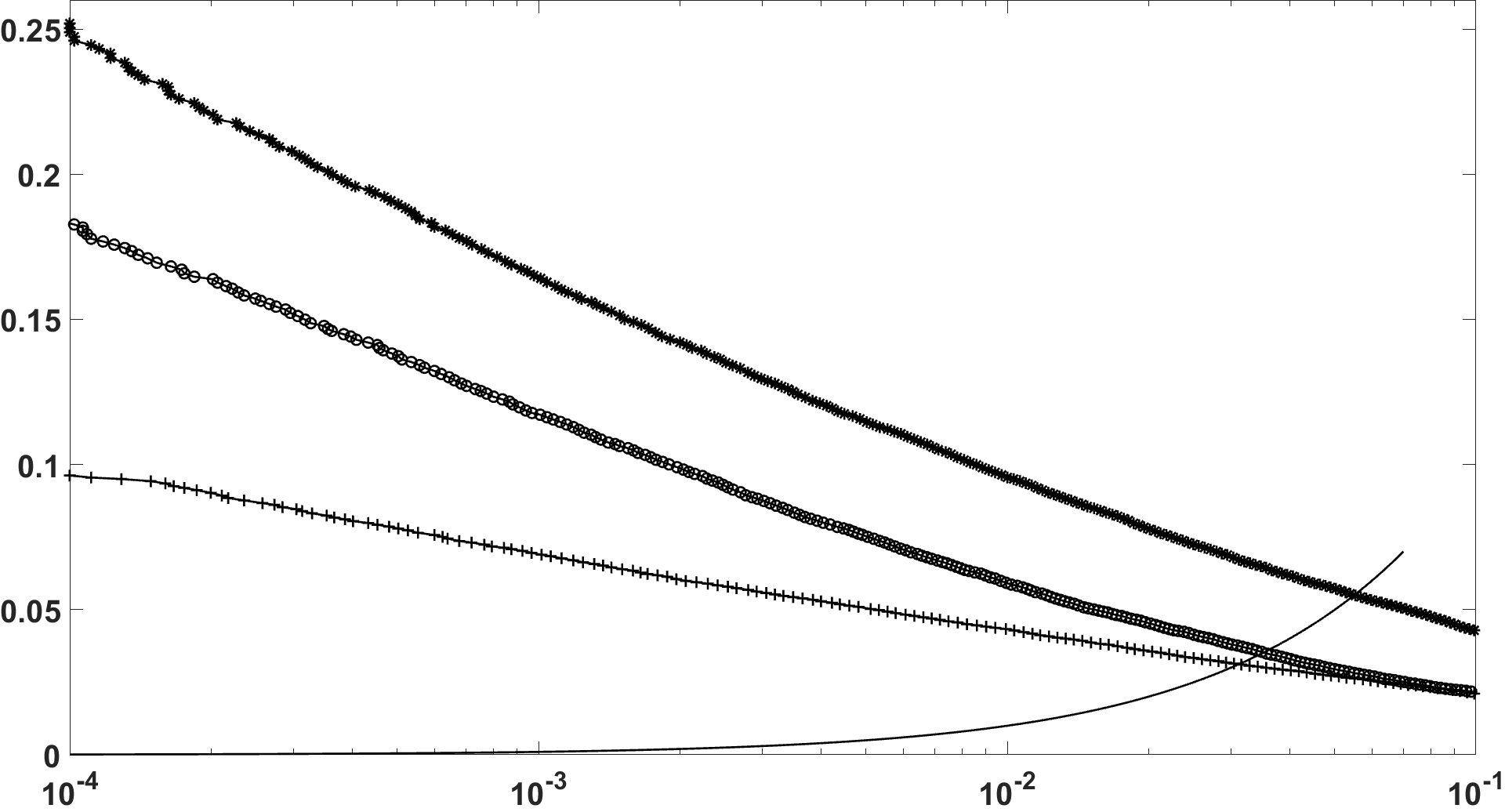}}
\put(45,196){MCYT database}
\put(30,185){$L_\vecx$}
\put(21,175){$L_{\vecx\qy}$}
\put(21,165){fusion}
\put(55,158){\tiny FAR=FRR}
\put(77,153){\small FAR}
\put(0,193){\small FRR}

\put(0,101){\includegraphics[width=88mm]{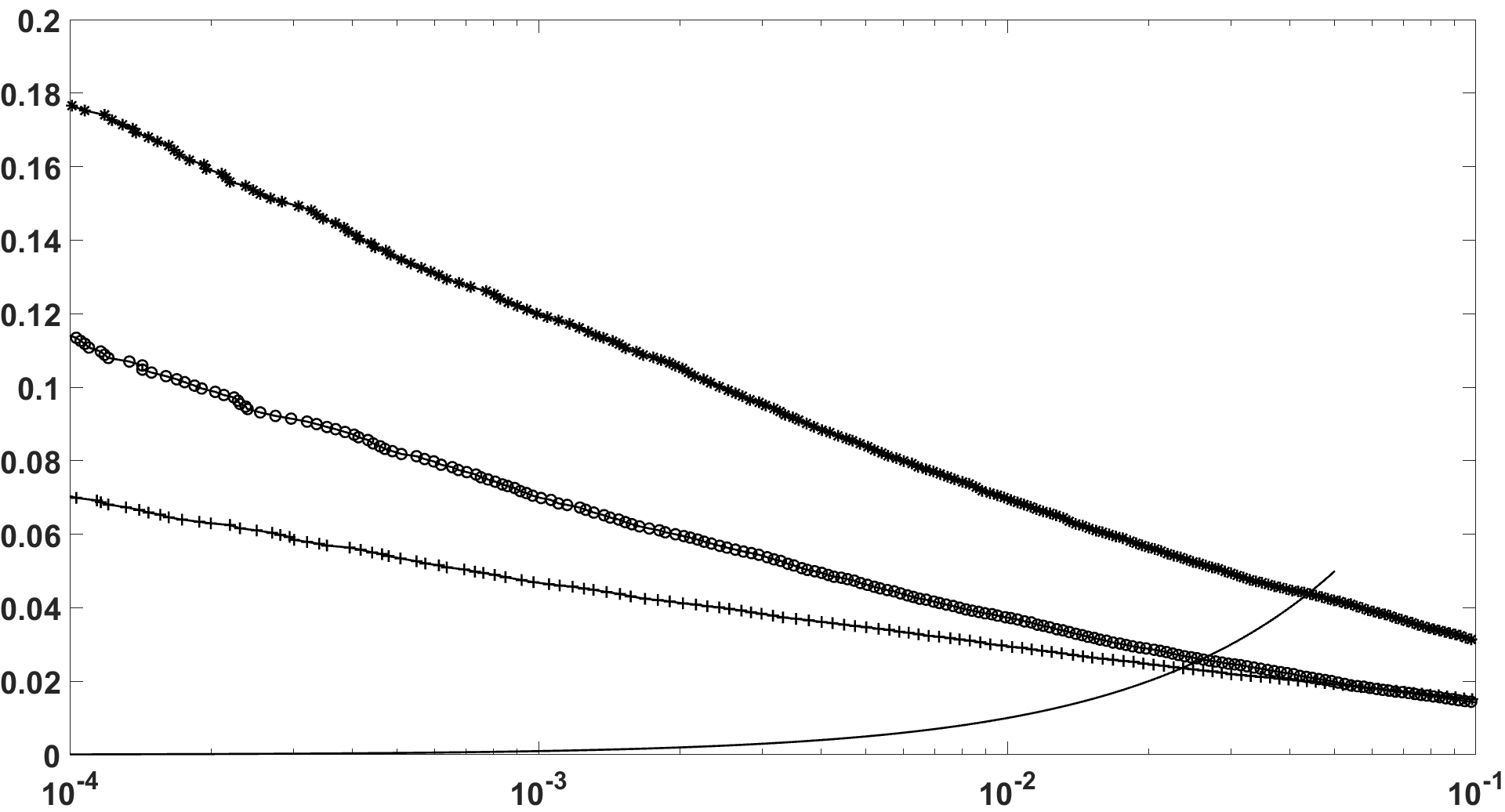}}
\put(45,144){MCYT database}
\put(30,131){$M_\vecx$}
\put(21,124){$M_{\vecx\qy}$}
\put(21,112){fusion}
\put(53,107){\tiny FAR=FRR}
\put(77,102){\small FAR}
\put(0,144){\small FRR}

\put(0,52){\includegraphics[width=86mm]{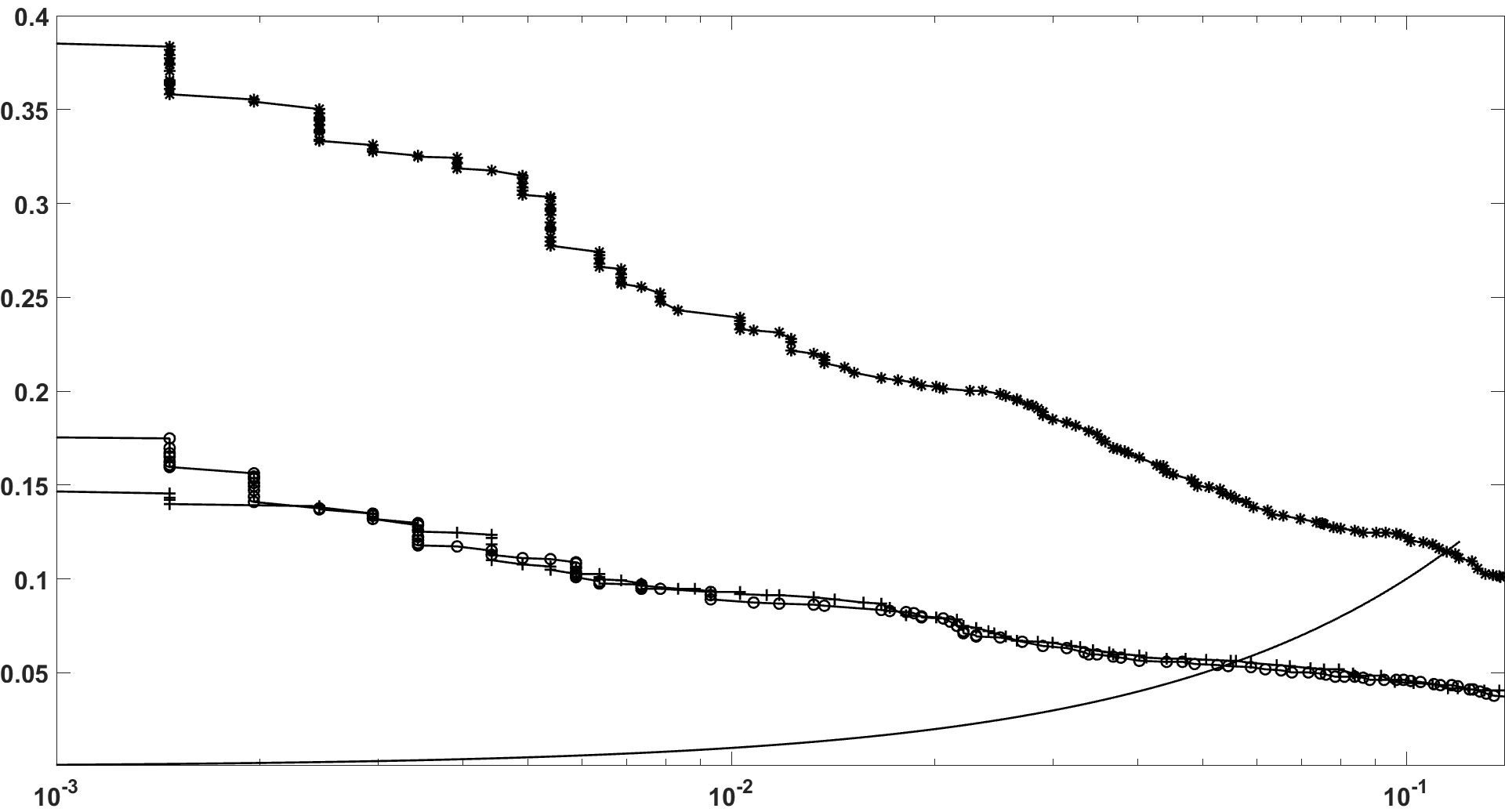}}
\put(45,94){VeriFinger database}
\put(30,81){$L_\vecx$}
\put(13,73){$L_{\vecx\qy}$}
\put(8,66){fusion}
\put(53,58){\tiny FAR=FRR}
\put(82.5,52){\small FAR}
\put(0,93.5){\small FRR}

\put(0,0){\includegraphics[width=88mm]{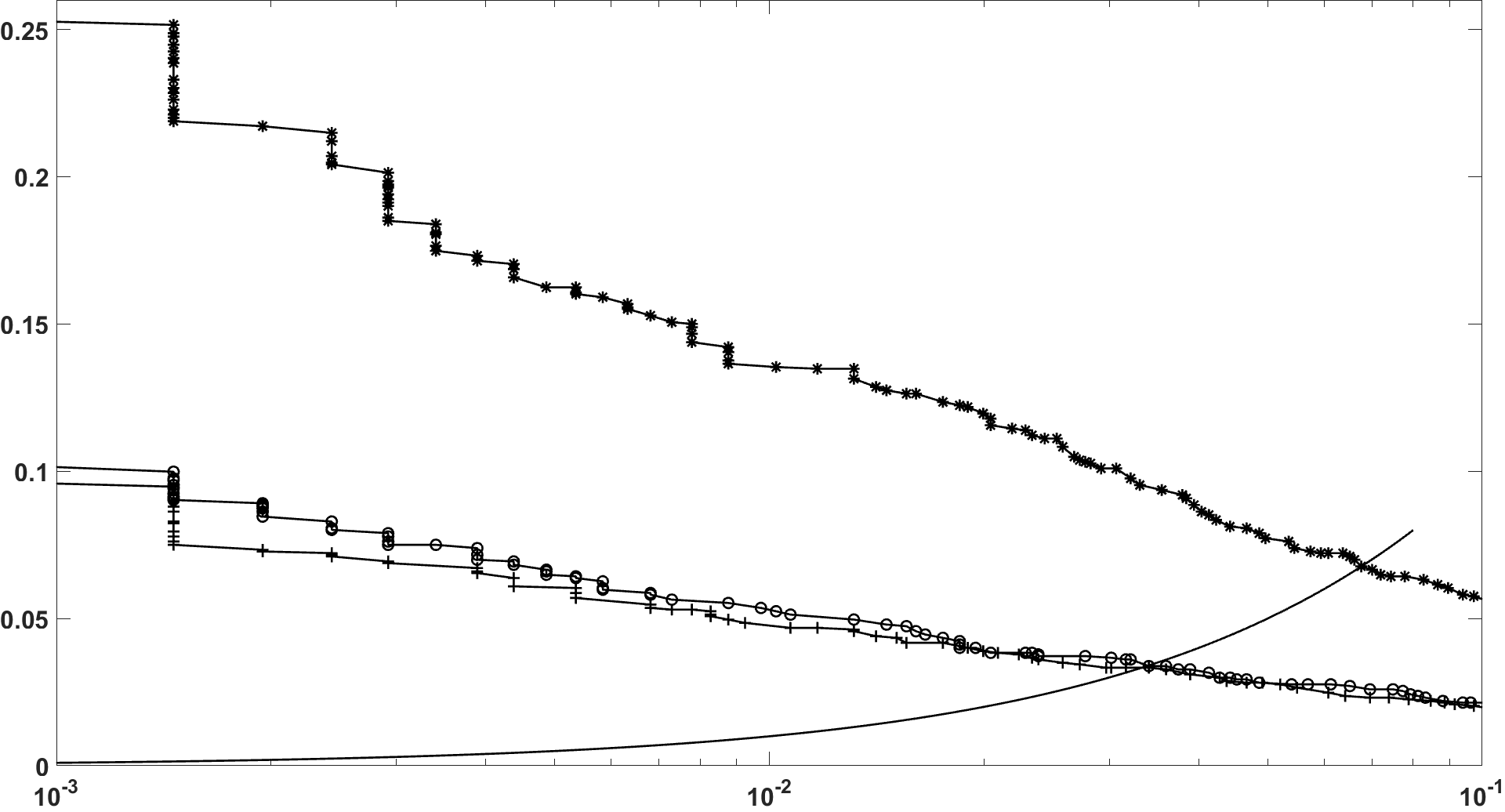}}
\put(45,44){VeriFinger database}
\put(30,33){$M_\vecx$}
\put(13,20){$M_{\vecx\qy}$}
\put(8,13){fusion}
\put(43,5.5){\tiny FAR=FRR}
\put(78,0){\small FAR}
\put(0,41){\small FRR}

\end{picture}
\caption{\it ROC curves for our pair-based spectral functions applied to two databases.
No rotation of the verification image.}
\label{fig:ROC}
\end{center}
\end{figure}


\begin{table}[h]
\caption{\it Equal Error Rates for a subset of ten individuals in the MCYT database
who have high-quality fingerprints. 
No rotation of the verification image.
The last row is from Table~VI in \cite{XVBKAG2009}.
The $L$ and $M$ function were computed for individuals 16,24,26,32,34,35,46,53,80,94.
}
\label{t:benchmark}
\begin{center}
\begin{tabular}{|l|c|c|c|}
\hline
Function $F$ & $F_\vecx$ & $F_{\vecx\qy}$ & Fusion
\\ \hline\hline
$L$ & 1.1\% & 0.73\% & 0.31\%
\\ \hline
$M$ & 0.65\% & 0.35\% & 0.15\%
\\ \hline
Xu et al & 0.47\%  & 0.42\%  & 0.22\% 
\\ \hline
\end{tabular}
\end{center}
\end{table}

\begin{figure}
\begin{center}
\setlength{\unitlength}{1mm}
\begin{picture}(90,100)(0,0)

\put(0,52){\includegraphics[width=88mm]{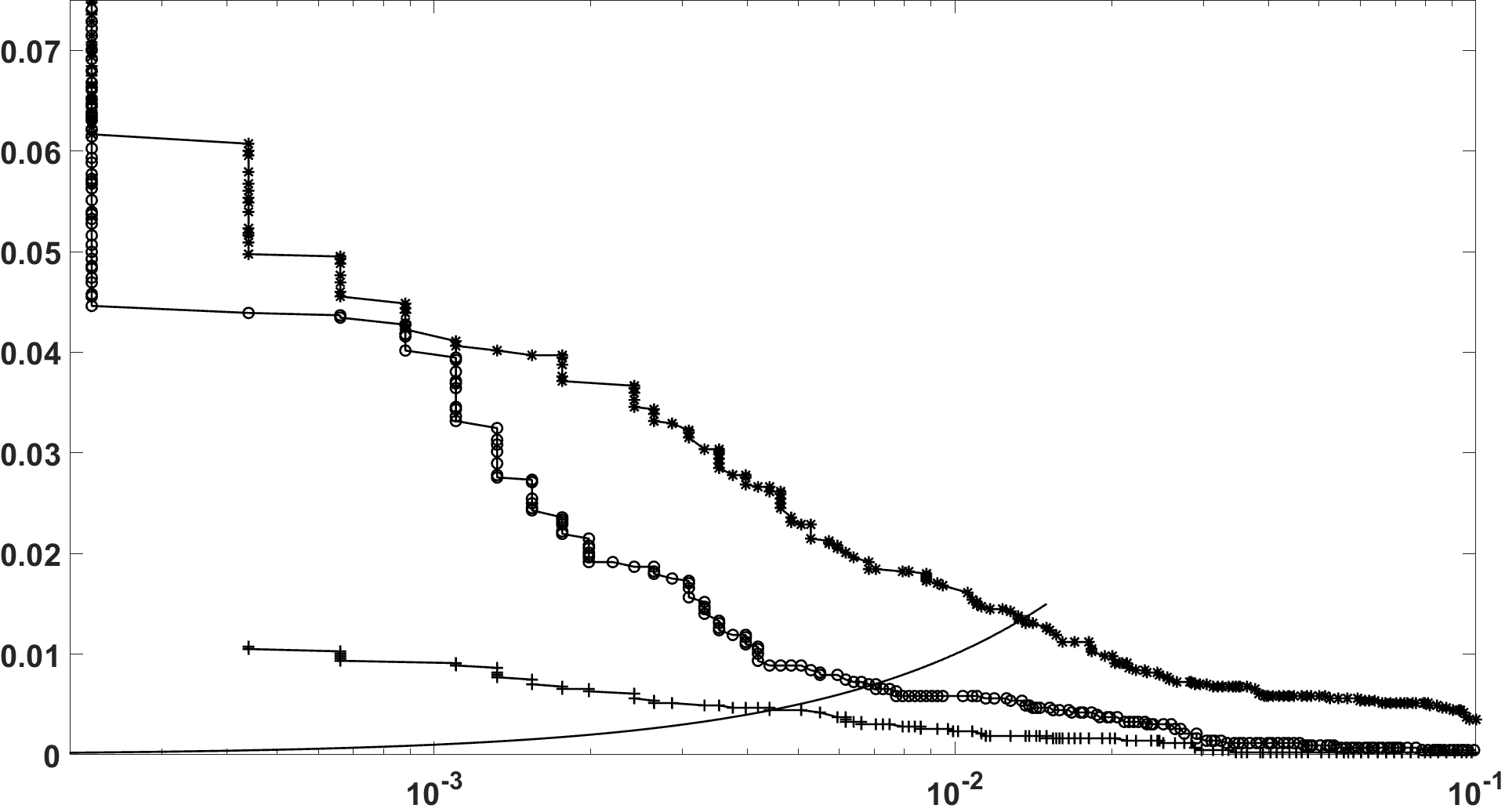}}
\put(35,96){MCYT database, 10 persons}
\put(32,80){$L_\vecx$}
\put(10,78){$L_{\vecx\qy}$}
\put(12,63){fusion}
\put(62,65){\tiny FAR=FRR}
\put(77,52){\small FAR}
\put(6,96){\small FRR}

\put(0,01){\includegraphics[width=88mm]{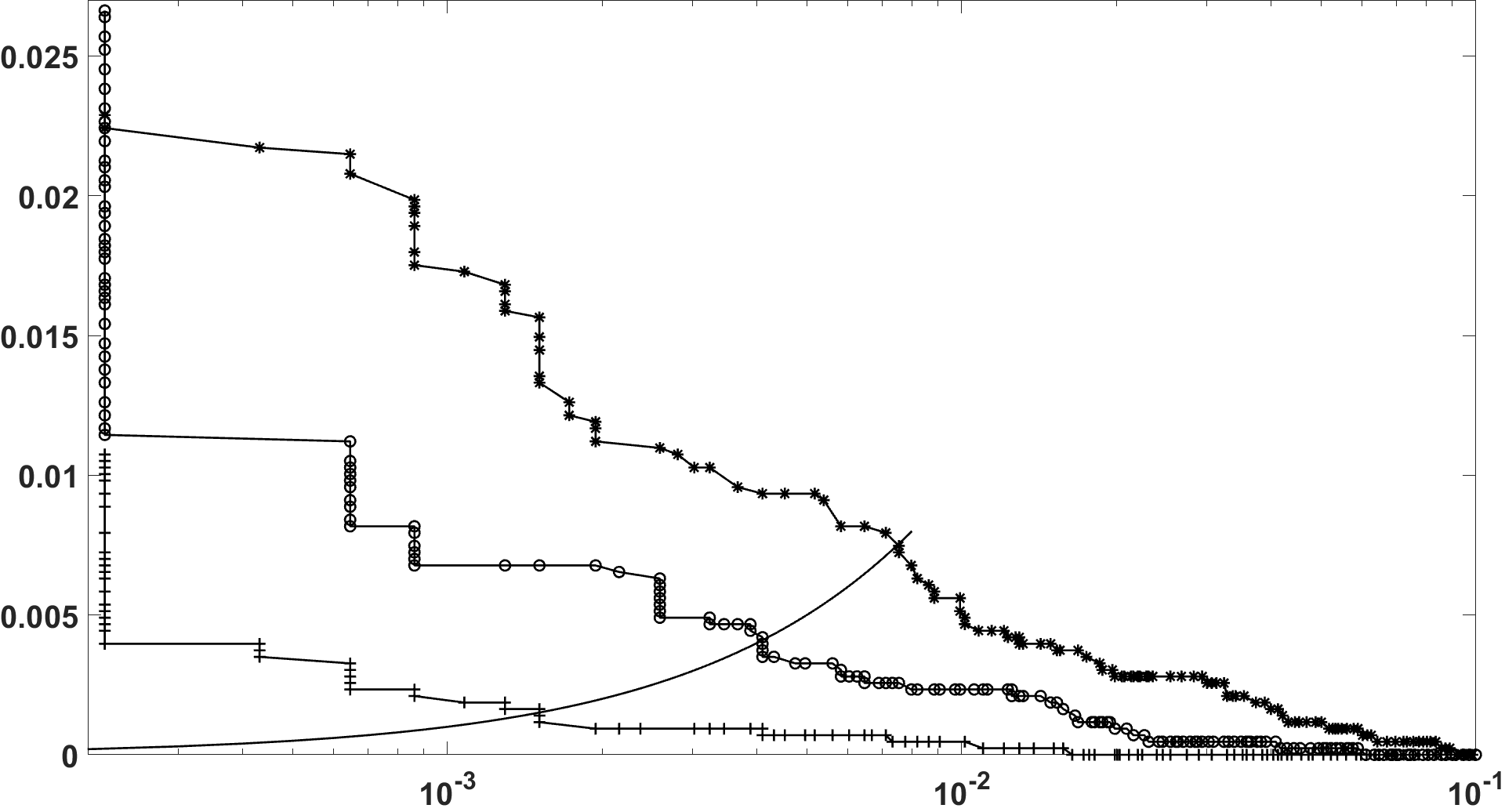}}
\put(35,45){MCYT database, 10 persons}
\put(30,31){$M_\vecx$}
\put(12,24){$M_{\vecx\qy}$}
\put(10,12){fusion}
\put(54,18){\tiny FAR=FRR}
\put(77,1){\small FAR}
\put(7,45){\small FRR}

\end{picture}
\caption{\it ROC curves for the ten-person subset of the MCYT database.
No rotation of the verification image.}
\label{fig:ROC10}
\end{center}
\end{figure}

\subsection{Rotation of the verification image}
\label{sec:imagerotation}

The results of Section~\ref{sec:ROC} were obtained without 
Xu et al.'s procedure of trying out several image rotations 
so as to optimise the matching score.
Now we discuss what happens when we do try a number of different rotation angles~$\qf$.

First we checked for the MCYT and the VeriFinger database how a rotation 
$\qf\in(-10^\circ,+10^\circ)$
affects the $M_\vecx$ and $M_{\vecx\qy}$-based score in case of a genuine image pair.
At some optimal angle $\qf_0$ the score is maximal.
For all genuine pairs we determined $\qf_0$, for $M_\vecx$ and $M_{\vecx\qy}$.
The histograms of $\qf_0$ are shown in Fig.\;\ref{fig:rothist}.
We see that typically $|\qf_0|< 6^\circ$.

In Fig.\,\ref{fig:rotROC} we present ROC curves that
show the impact of trying multiple rotation angles $\qf$ in a limited range;
we set the range based on Fig.\,\ref{fig:rothist}.
In the case of the MCYT database we see a consistent though small improvement.
For the VeriFinger database the change is not always favourable;
the ROC curves intersect. 
For both databases, the effect on the EER is minimal.

Increasing the range of $\qf$ does not improve the matching of genuine pairs;
it does however increase the FAR. Hence the ROC curves become worse when we increase 
the range of~$\qf$.

These results allow for a very interesting trade-off:
instead of opting for a minimal improvement of matching accuracy,
we can skip the $\qf$-search and thus significantly reduce the computation time.
Note that Xu et al.'s method has a $\qf$-search with 11 different values of~$\qf$.

\begin{figure}[t]
\begin{center}
\setlength{\unitlength}{1mm}
\begin{picture}(90,50)(0,0)

\put(0,25){\includegraphics[width=44mm]{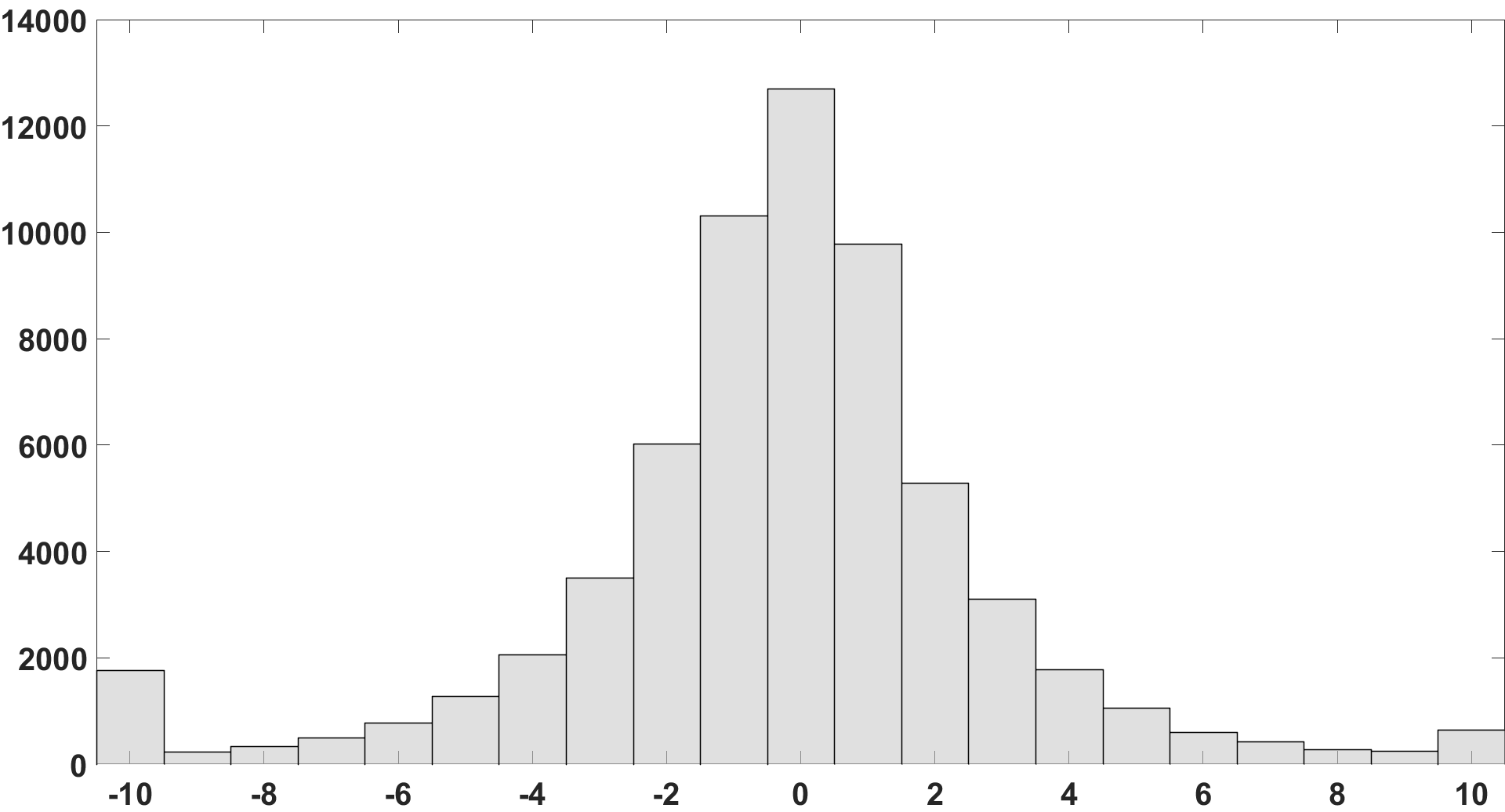}}
\put(44,25){\includegraphics[width=45mm]{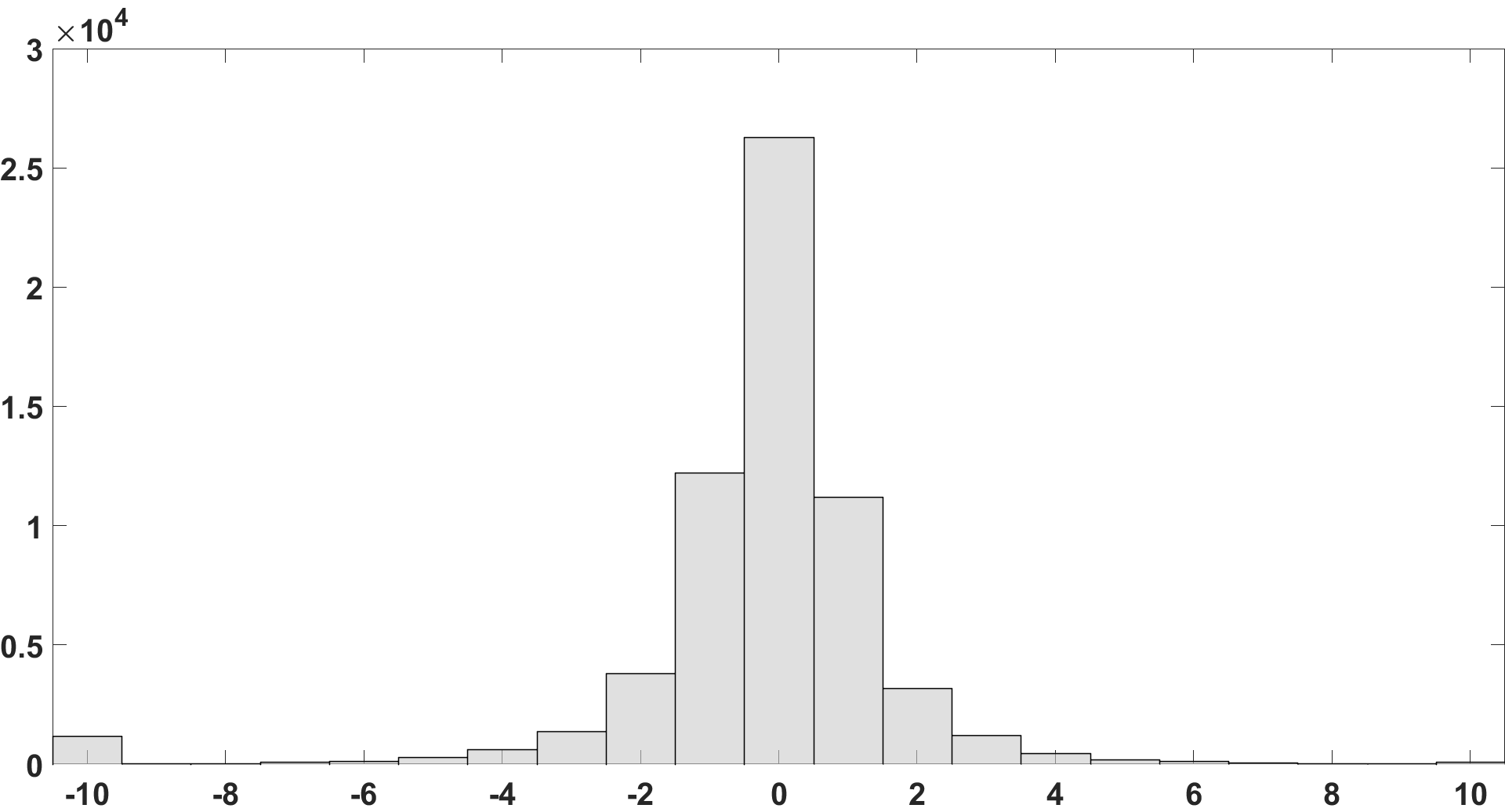}}
\put(0.5,0){\includegraphics[width=43mm]{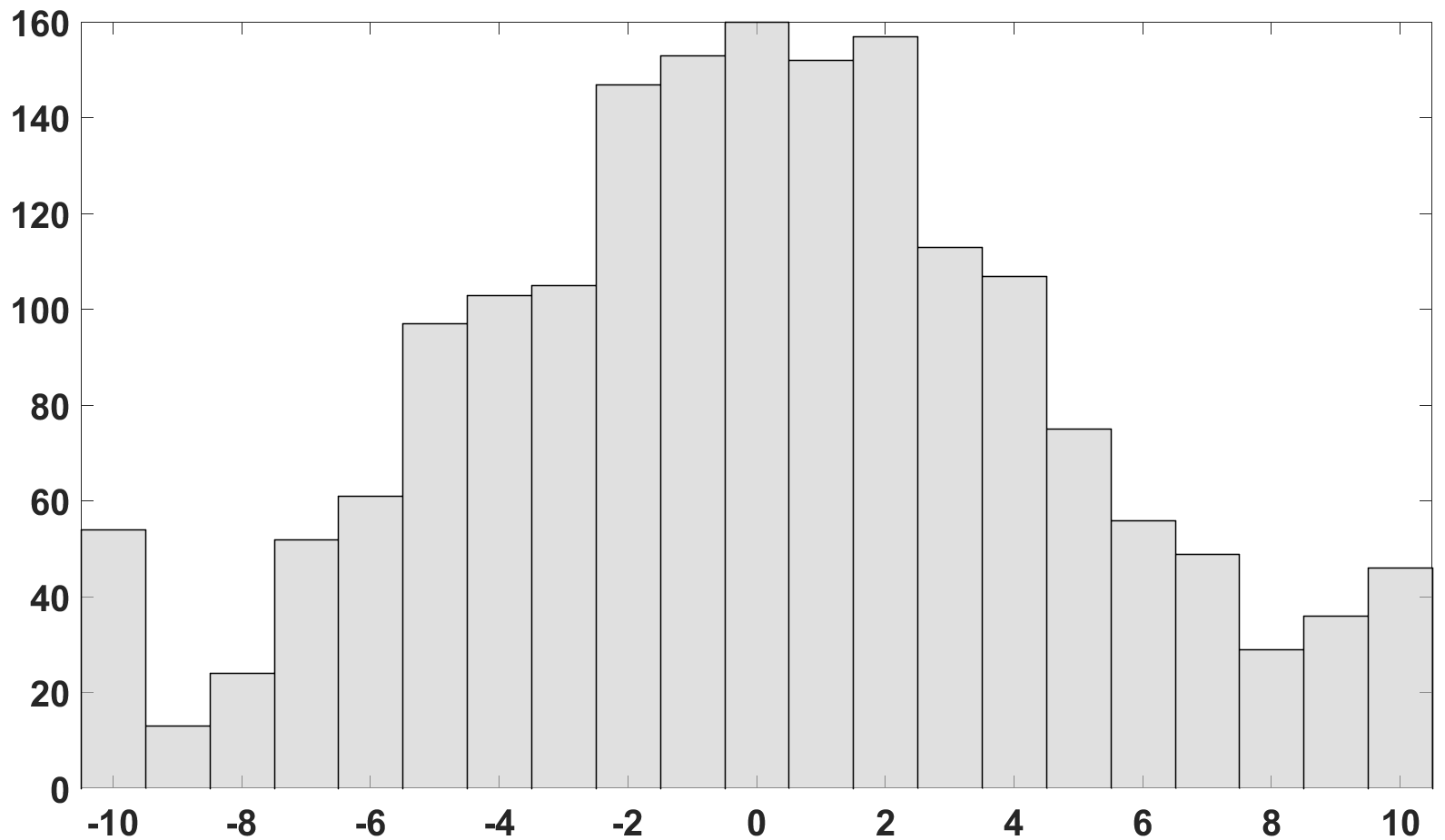}}
\put(44,0){\includegraphics[width=45mm]{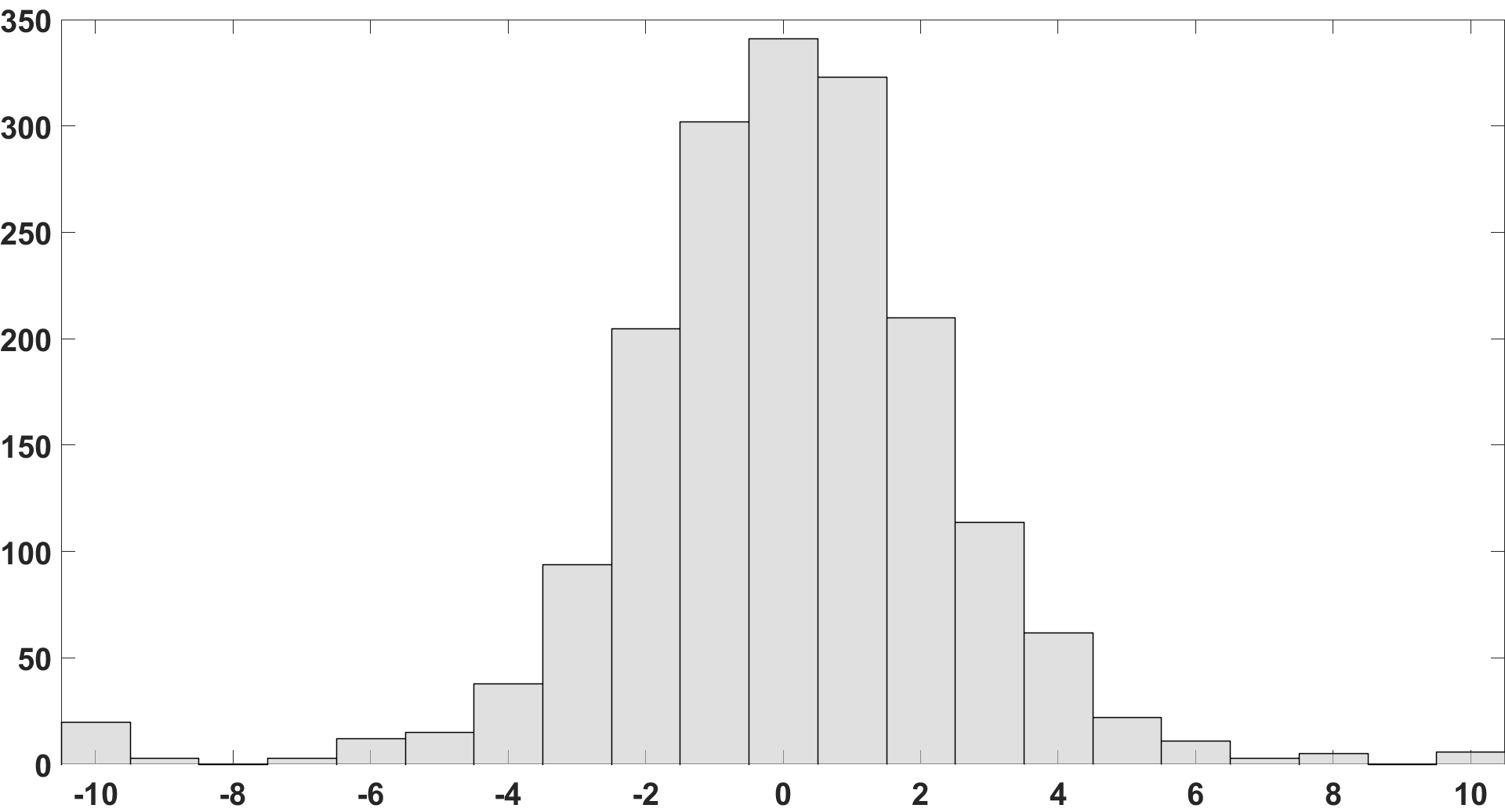}}

\put(4,45){\tiny MCYT}
\put(4,43){\tiny $M_\vecx$}
\put(4,21){\tiny VeriFinger}
\put(4,19){\tiny $M_\vecx$}

\put(49,45){\tiny MCYT}
\put(49,43){\tiny $M_{\vecx\qy}$}
\put(49,21){\tiny VeriFinger}
\put(49,19){\tiny $M_{\vecx\qy}$}

\end{picture}
\caption{\it Histograms of the optimal rotation angle~$\qf_0$ (degrees).}
\label{fig:rothist}
\end{center}
\end{figure}

\begin{figure}
\begin{center}
\setlength{\unitlength}{1mm}
\begin{picture}(90,100)(0,0)

\put(0,52){\includegraphics[width=88mm]{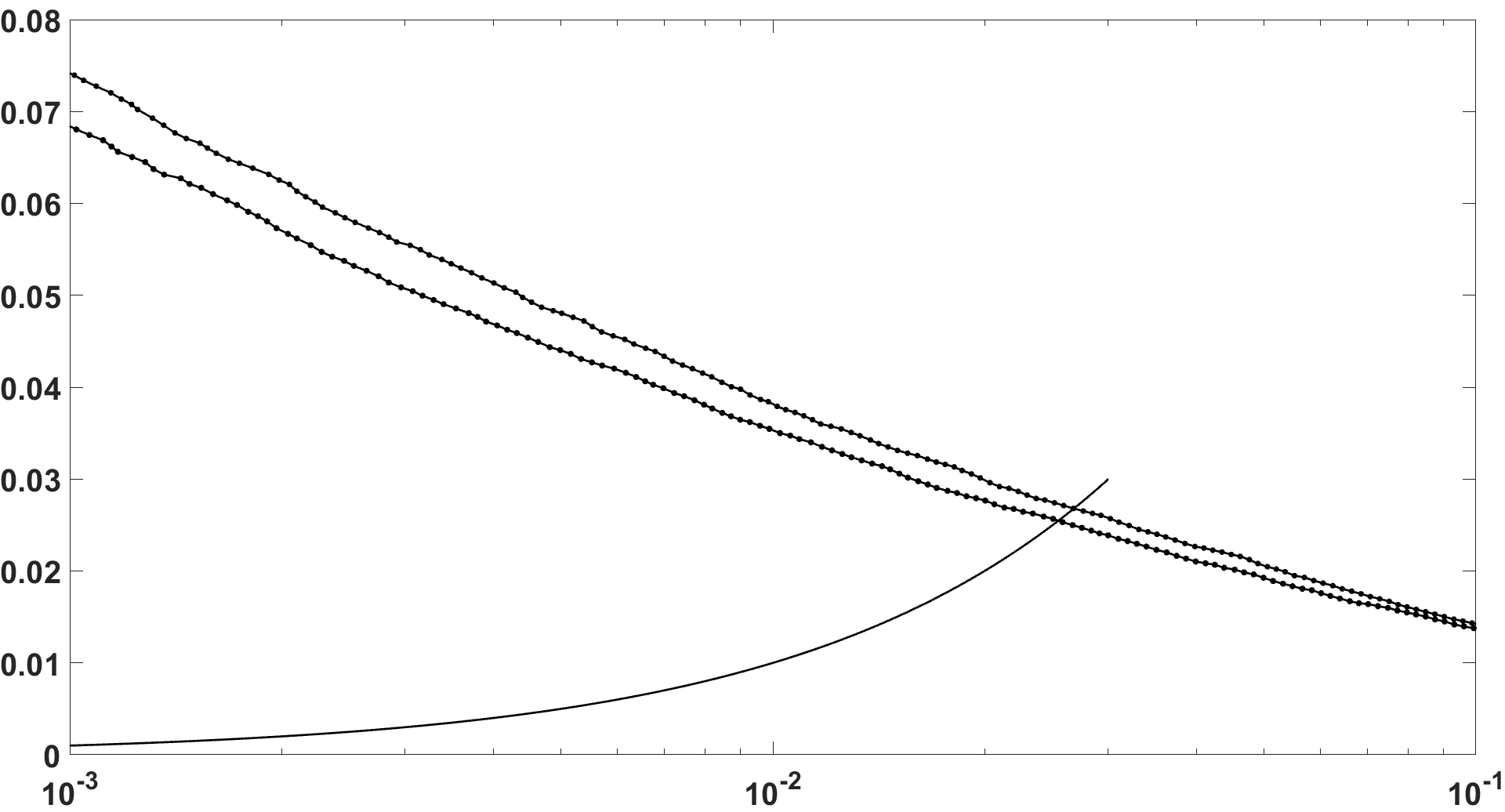}}
\put(35,95){MCYT database, $M_{\vecx\qy}$}
\put(57,64){\tiny FAR=FRR}
\put(77,52.5){\small FAR}
\put(5,96){\small FRR}
\put(35,81){\small without rotation}
\put(14,78){\small with rotation}
\put(14,74){\small $|\qf|\leq 3^\circ$}

\put(0,01){\includegraphics[width=88mm]{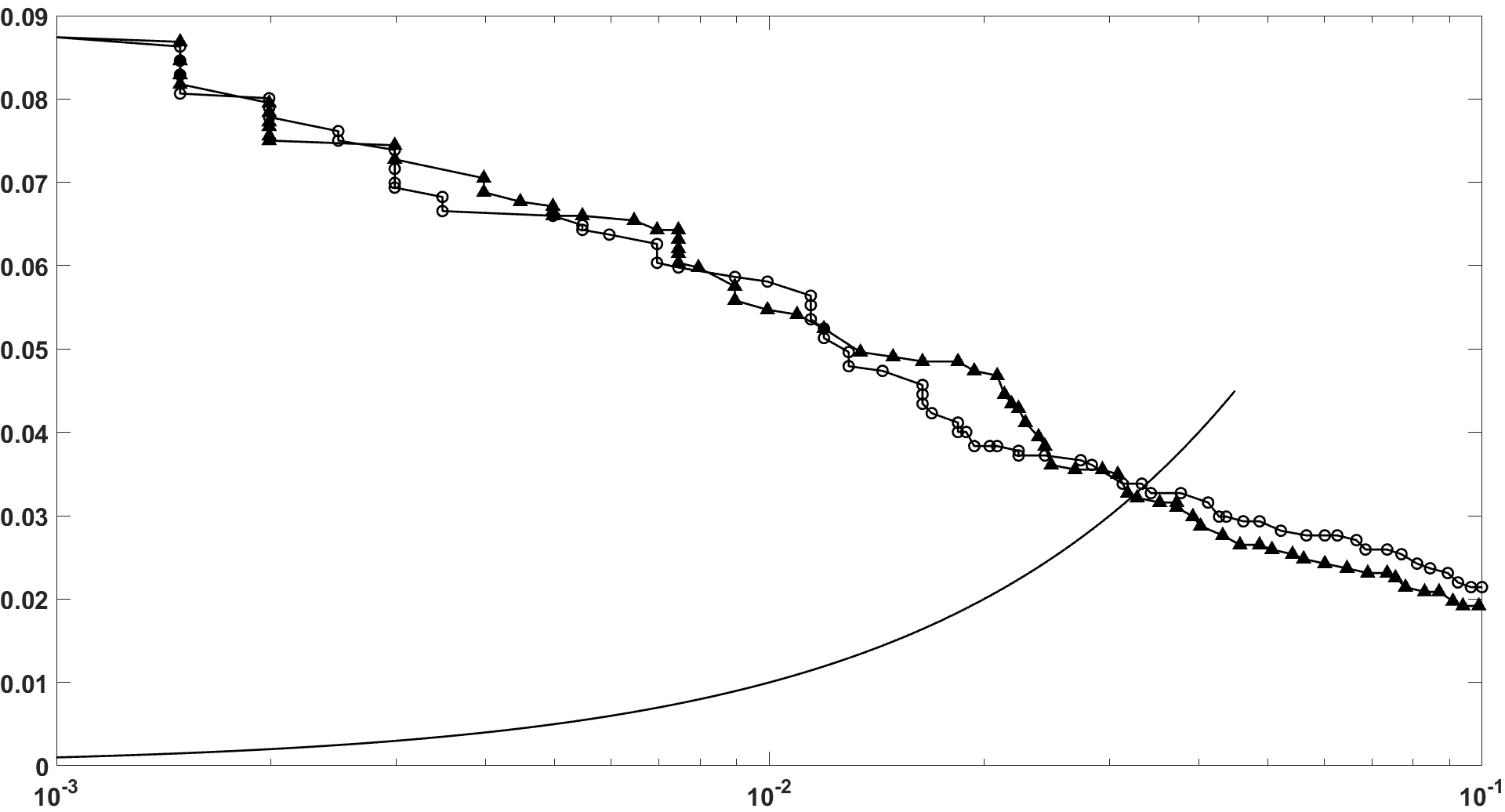}}
\put(35,44){VeriFinger database, $M_{\vecx\qy}$}
\put(57,11){\tiny FAR=FRR}
\put(77,1){\small FAR}
\put(4,43){\small FRR}
\put(52,28){\small with rotation, $|\qf|\leq 4.5^\circ$}
\put(35,21){\small without rotation}

\end{picture}
\caption{\it ROC curves with and without rotation of the verification image.}
\label{fig:rotROC}
\end{center}
\end{figure}

\section{Computational efficiency}
\label{sec:efficiency}

{\em In this analysis we do not use the potential speedup that can be gained by
skipping the $\qf$-search.}

Speed is important predominantly in the verification phase.
From a freshly captured image the spectral function has to be computed on a number of
grid points which we denote as $N_{\rm gr}$.
The spectral function has to be computed not once but several times,
because $N_\qf$ different image orientations have to be tried.
Fortunately this does not multiply the total effort\footnote{
Here we look only at the computation of the spectral function and the score;
not at the cost of $N_\qf$ Secure Sketch reconstruction attempts. 
}
by a factor $N_\qf$,
as the spectral function has a simple transform under rotation.
(This holds for Xu et al as well as our $L$ and $M$ functions.) 

Let $Z$ be the number of minutiae.
Let us denote the cost of computing one summation term of the spectral function in one grid point as 
$T_{\rm s}$, and the cost of applying a rotation transform in one grid point as
$T_{\rm rot}$.
The cost of computing the score can be written as $c\cdot N_{\rm gr}$
where $c$ is some small constant.
The superscript `$G$' will refer to Xu et al's spectral function;
the superscript `$M$' to our function~$M$.
The total cost for the verification phase (not counting the secure sketch) is
\bea
	&& \!\!\!\!\!\!\!
	\mbox{Xu et al:}\;\;\;\;  
	N_{\rm gr}^G Z T_{\rm s}^G + (N_\qf-1)N_{\rm gr}^G T_{\rm rot}^G +N_\qf cN_{\rm gr}^G 
	\nn\\ &&
	\!\!\!\!\!\!\!\!\!\!\!\!\!
	\mbox{pair-based:}\, 
	N_{\rm gr}^M{Z\choose 2} T_{\rm s}^M + (N_\qf\!-\!1)N_{\rm gr}^M T_{\rm rot}^M +N_\qf cN_{\rm gr}^M.
	\nn
\eea
We have $T_{\rm s}^G\approx T_{\rm s}^M$, $T_{\rm rot}^G\approx T_{\rm rot}^M$, $T_{\rm rot}< T_{\rm s}$.
The main difference between the two approaches lies in the first term:
$N_{\rm gr}^G Z$ versus $N_{\rm gr}^M {Z\choose 2}$,
i.e. $N_{\rm gr}^G$ versus $\fr12 N_{\rm gr}^M (Z-1)$.
Xu et al report a $128\times 256$ grid, yielding 
$N_{\rm gr}^G=32768$.
In contrast, our $M_{\vecx\qy}$-function is evaluated on a grid of size
$N_{\rm gr}^M\leq 16\cdot 25=400$.
Given that typically $Z\approx 35$,
we have $\fr12 N_{\rm gr}^M (Z-1) \approx 6800$.
Hence our verification is faster than \cite{XVBKAG2009,XV2009,XV2009CISP}.

Note that \cite{XuVeldhuis2010} introduces a reduced template size by
applying Principal Component Analysis or a Discrete Fourier Transform
to select informative features. This selection reduces the template size by roughly a factor~10.
However, these methods still require computation of the spectral function on many grid points.

\section{Discussion}
\label{sec:discussion}

Achieving translation invariance by looking at
{\em minutia pairs} seems to be advantageous compared to
taking the absolute value of a Fourier transform.
The minutia-pair approach is able to extract information from a fingerprint 
using fewer grid points.
We conjecture that this is due to the fact that our spectral functions
retain phase information instead of discarding it.
Of the four functions that we studied, the $M_{\vecx\qy}$ performs best.
Fusion of the matching scores from $M_\vecx$ and $M_{\vecx\qy}$
leads to an EER comparable to Xu et al.

Due to the reduction of the number of grid points
our method is faster than the verification described by Xu et al., in spite of  
the increased number of summation terms.
As an unexpected bonus, it turns out that we can omit the search for an optimal rotation angle;
this gives an additional speed improvement.

As topics for future work we mention (i) further speedup
by discarding grid points that have a bad signal-to-noise ratio;
(ii) applying Principal Component Analysis and similar techniques to improve the EER;
(iii) constructing a HDS based on $M_\vecx$ and $M_{\vecx\qy}$.

\bibliographystyle{plain}
\bibliography{minutiae}

\end{document}